
\NeedsTeXFormat{LaTeX2e}

\documentclass{jpp}

\newcommand{\be}{\begin{equation}}
\newcommand{\ee}{\end{equation}}
 
\usepackage[psamsfonts]{amsfonts}
\usepackage{amsmath, amsthm, amssymb}
\usepackage{multirow}
\usepackage{enumerate}
\usepackage{setspace}
\usepackage[dvips]{graphicx}
\usepackage{natbib}

\newcommand{\derv}[1]{\frac{\partial}{\partial #1}}

\newcommand{\deriv}[2]{\frac{\partial #1}{\partial #2}}

\newcommand{\beqn}{\begin{equation}}
\newcommand{\eeqn}{\end{equation}}
\newcommand{\beqnar}{\begin{eqnarray}}
\newcommand{\eeqnar}{\end{eqnarray}}

\newtheorem{theorem}{Theorem}[section]
\newtheorem{proposition}[theorem]{Proposition}

\newenvironment{remark}[1][Remark]{\begin{trivlist}
\item[\hskip \labelsep {\bfseries #1}]}{\end{trivlist}}

\ifCUPmtlplainloaded \else
  \checkfont{eurm10}
  \iffontfound
    \IfFileExists{upmath.sty}
      {\typeout{^^JFound AMS Euler Roman fonts on the system,
                   using the 'upmath' package.^^J}%
       \usepackage{upmath}}
      {\typeout{^^JFound AMS Euler Roman fonts on the system, but you
                   dont seem to have the}%
       \typeout{'upmath' package installed. JPP.cls can take advantage
                 of these fonts, if you use 'upmath' package.^^J}%
       \providecommand\upi{\pi}%
      }
  \else
    \providecommand\upi{\pi}%
  \fi
\fi


\ifCUPmtlplainloaded \else
  \checkfont{msam10}
  \iffontfound
    \IfFileExists{amssymb.sty}
      {\typeout{^^JFound AMS Symbol fonts on the system, using the
                'amssymb' package.^^J}%
       \usepackage{amssymb}%
       \let\le=\leqslant  
       \let\ge=\geqslant  
      }{}
  \fi
\fi


\ifCUPmtlplainloaded \else
  \IfFileExists{amsbsy.sty}
    {\typeout{^^JFound the 'amsbsy' package on the system, using it.^^J}%
     \usepackage{amsbsy}}
    {\providecommand\boldsymbol[1]{\mbox{\boldmath $##1$}}}
\fi



\providecommand\bnabla{\boldsymbol{\nabla}}
\providecommand\bcdot{\boldsymbol{\cdot}}

\newcommand\etb{\boldsymbol{\eta}}

\newcommand\Real{\mbox{Re}} 
\newcommand\Imag{\mbox{Im}} 
\newcommand\Ai{\mbox{Ai}}            
\newcommand\Bi{\mbox{Bi}}            

%
%
%
\newcommand\slsQ{\mathsfbi{Q}} 



\newsavebox{\astrutbox}
\sbox{\astrutbox}{\rule[-5pt]{0pt}{20pt}}
\newcommand{\astrut}{\usebox{\astrutbox}}

\newcommand\p{\ensuremath{\partial}}

\newcommand\kgd{\ensuremath{k\gamma d}}

\newcommand\squart{\ensuremath{{\textstyle\frac{1}{4}}}}
\newcommand\thalf{\ensuremath{{\textstyle\frac{1}{2}}}}
\newcommand\Gat{\ensuremath{\widetilde{G_a}}}
\newcommand\ttz{\ensuremath{\rightarrow 0}}
\newcommand\ndq{\ensuremath{\frac{\mbox{$\partial$}}{\mbox{$\partial$} n_q}}}
\newcommand\sumjm{\ensuremath{\sum_{j=1}^{M}}}
\newcommand\pvi{\ensuremath{\int_0^{\infty}%
  \mskip \ifCUPmtlplainloaded -30mu\else -33mu\fi -\quad}}

\newcommand\etal{\mbox{\textit{et al.}}}

\newtheorem{lemma}{Lemma}
\newtheorem{corollary}{Corollary}

\title[Ion Acoustic Travelling Waves in Multi-Fluid Plasmas]
{Ion Acoustic Travelling Waves}

\author[G. M. Webb, R. H. Burrows, X. Ao  and G. P. Zank]%
{G.\ns M.\ns W\ls E\ls B\ls B$^1$%
 \thanks{Email address for correspondence: gmw0002@uah.edu},\ns
R.\ls H.\ns B\ls U\ls R\ls R\ls O\ls W\ls S$^1$\break
X.\ns A\ls O$^1$
 \and G.\ns P.\ns Z\ls A\ls N\ls K$^{1,2}$}

\affiliation{$^1$Center for Space Plasma and Aeronomic Research, 
The University of Alabama in Huntsville, 
Huntsville AL 35805, USA\\[\affilskip]
$^2$Department of Physics, The University of
Alabama in Huntsville, Huntsville AL 35899, USA}


\begin{document}


\maketitle

\begin{abstract}
Models for travelling waves in multi-fluid plasmas 
give essential insight 
into fully nonlinear wave structures in plasmas, not readily available 
from either 
numerical simulations or from weakly nonlinear wave theories. We illustrate
these ideas using one of the simplest models of an electron-proton 
multi-fluid plasma for the case where there is no magnetic field 
or a constant normal magnetic field present. We show that the travelling 
waves can be reduced to a single first order differential equation 
governing the dynamics. We also show that the equations admit a 
multi-symplectic Hamiltonian formulation in which both the space and 
time variables can act as the evolution variable.  An integral equation 
useful for calculating adiabatic, electrostatic solitary wave signatures 
for multi-fluid plasmas with arbitrary mass ratios is presented. The 
integral equation arises naturally from a fluid dynamics approach 
for  a two fluid plasma, with a given mass ratio of the two 
species (e.g. the plasma could be an electron proton or an electron 
positron plasma). Besides its intrinsic interest, the integral 
equation solution provides a useful analytical test for numerical codes 
that include a proton-electron mass ratio as a fundamental constant, such as 
for particle in cell (PIC) codes.  
The integral equation is used to delineate the physical characteristics of 
ion acoustic 
travelling waves consisting of  hot electron and cold proton fluids. 
\end{abstract}




\section{Introduction}
Electrostatic solitary waves (ESW's) have been observed simultaneously with 
reflected suprathermal ions at collisionless shocks, 
e.g. by the geotail spacecraft
 at Earth's bow shock (\citet{Shin08}). 
Such waves are observed to be of the order 
of 2-7 Debye lengths and amplitudes of 
$\sim$ 100 mV/m, by the WIND spacecraft.  

Quasi-perpendicular shock models in which suprathermal ions gain energy 
in the motional electric field upon reflection from an electrostatic shock 
potential (ESSP) have had some success as a possible dissipation mechanism 
for super-critical collisionless shocks, and as an explanation for some 
observations. For example, reflected pick-up ions can explain the cooler 
than expected solar wind observed by Voyager 2 downstream of the heliospheric 
termination shock [\cite{Richardson08},\cite{Zank96,Zank10}, 
\cite{Burrows10},\cite{Oka11}]. \cite{Zank96,Zank10} 
point out the importance of the cross shock electrostatic potential at the 
solar wind termination shock and at travelling interplanetary shocks, 
and the acceleration of pick-up ions by the shock surfing mechanism. The 
dissipation mechanism for the solar wind termination shock 
is  due in large part to the interaction of pick-up ions 
with the shock since they
 carry  most of the  momentum flux of the 
solar wind (\cite{Zank96,Zank10}). \cite{Oka11} studied in detail the 
dissipation mechanism and pick-up ion distribution at the Solar Wind 
termination shock by using PIC simulations.

Observations indicate that electrostatic solitary waves reflect 
particles and are fundamental components of shocks
 \citep{Shin08,Zank96,Wilson07}. Self consistent models of the solar wind 
termination shock based in part on a fluid dynamics approach 
needs to incorporate the effects of reflected particles such as 
shock surfing pick-up ions. These models cannot make the typical assumption of
charge neutrality since electrostatic structures arise from charge separation 
(e.g. \cite{Tidman71}). To show the relevance of our solutions to simulations
 that use an artificial ratio of the electron to proton mass 
(e.g. \cite{Oka10,Oka11}) we discuss the electron to proton mass ratio 
dependence of the amplitude of ion-acoustic solitary waves, and find that 
the amplitude of the wave increases with increasing $m_e/m_p$. 
We also find that 
the wave amplitude increases with the ion acoustic Mach number, and the width of the solitons decreases with increasing Mach number, which may be important 
in understanding the dissipation mechanism at quasi-perpendicular shocks 
(\cite{Zank96,Burrows10,Oka10,Lipatov99}).  
  
There are more complicated models of ion acoustic waves than the model 
adopted in the present paper (see e.g. \cite{Baluku10} 
who investigated ion acoustic solitary waves in a plasma with both cool and hot 
electrons). We will not use these more complicated models in the present paper. 

 McKenzie and co-workers investigated a variety 
of two fluid models of fully nonlinear travelling waves in space plasmas 
(e.g. \cite{McKenzie-etal04}
).  These solutions encompass both cases in 
which the charge neutrality assumption is a valid approximation 
(e.g. for whistler oscillitons), and also  for other cases where the 
charge neutrality assumption is not a good approximation (e.g. for 
ion-acoustic traveling waves where the charge separation electric field 
is essential in describing the wave structure).  
\cite{Verheest04a} 
discuss the charge neutrality assumption for 
whistler oscillitons. \cite{Dubinin07}  carried out extensive data analysis 
on the coherent whistler emissions in the magneto-sphere Cluster observations, 
and found strong evidence of whistler oscillitons in the data. 
\cite{Dubinov07a,Dubinov07b}  studied periodic space charge waves. In the limit 
as the wave spatial period goes to infinity, these solutions reduce to the 
ion-acoustic solitons studied by \cite{McKenzie02}.  
\cite{Dubinov07a,Dubinov07b} did not use the same parameters as 
\cite{McKenzie02},
but the underlying model is the same as that used by 
\cite{McKenzie02}. 
\cite{McKenzie02} did not consider periodic travelling waves with a 
finite wave period.

\cite{Webb05,Webb07,Webb08}  developed a Hamiltonian formulation  
for nonlinear travelling whistler waves in quasi-charge neutral plasmas. 
 \cite{Sauer01,Sauer02,Sauer03}, 
\cite{Dubinin03, Dubinin07} and   
\cite{McKenzie-etal04}  studied  whistler oscillitons.  
\cite{Webb07, Webb08} showed that the travelling waves in this model 
are described by two different but equivalent Hamiltonian formulations.
In the first formulation, the Hamiltonian is the total 
conserved, longitudinal $x$-momentum integral of the system, in 
which the energy flux integral $\varepsilon=const.$ is a constraint 
and for which $d/dx$ is the Hamiltonian evolution operator. 
In the second Hamiltonian approach, the Hamiltonian is the energy flux 
integral $\varepsilon$, in which the $x$-momentum integral $P_x=const.$ 
is a constraint. In the latter case, the Hamiltonian evolution operator 
is the advective Lagrangian time derivative operator $d/d\tau=u_x d/dx$. 
These dual variational principles are analogous to the dual or multi-symplectic variational principles obtained by \cite{Bridges92} in studies of travelling 
water waves (see also \cite{Bridges97a,Bridges97b,Bridges05}). 
\cite{McKenzie-etal06}  
cast the spatial evolution
equations for solitary travelling waves in a Hall current plasma  
in Hamiltonian form, in which the energy flux integral $\varepsilon$ 
is the Hamiltonian and the longitudinal momentum flux integral $P_x=const.$
 acts as a constraint. \cite{Mace07} derived conservation laws for 
travelling waves in multi-fluid plasmas using Bernoulli type theorems 
and generalized vorticities for the different plasma species. 
 \cite{Hydon05} develops the general theory for 
multi-symplectic Hamiltonian systems, and \cite{Cotter07} show that 
multi-symplectic equations can be derived for fluid systems based on Clebsch 
variables which act as canonically conjugate momenta, and in which the 
Clebsch variables are Lagrange multipliers for the constraints in the 
variational principle. 
 \cite{Bridges01,Bridges06}, \cite{Marsden98} and \cite{Marsden99}  
 use multi-symplectic systems in numerical finite difference
schemes  for Hamiltonian wave equations (see also 
\cite{Brio10} for a discussion of multi-symplectic difference schemes). 

In the present paper we develop a multi-symplectic 
description for ion acoustic travelling waves in an electron proton, 
or electron positron plasma, in which there is a non-trivial electric field
induced by charge separation.  We also present a detailed description of ion 
acoustic travelling waves in order to illustrate the different 
types of solution that are possible (see \cite{McKenzie02,McKenzie03a}, 
\cite{McKenzie-Doyle03}, \cite{McKenzie-etal04} 
for a related 
analysis of fully nonlinear, ion acoustic  travelling waves). 
In Section 2, we present the basic equations of the model. 
 In order to keep 
the analysis simple we consider only the case of plasmas in which there is 
no magnetic field present, or a constant parallel magnetic field is present in 
the travelling wave. In Section 3, we discuss the dispersion equation 
for linear waves in a two-fluid, electron-ion  plasma. 
In Section 4, we develop the dual variational 
principles for travelling, ion acoustic waves. 
Section 5 presents examples of the travelling ion acoustic waves, 
illustrating the physics and including a discussion of their possible 
applications to both spacecraft observations, and simulations. 
Section 6 concludes with a summary and discussion.

\section{Basic Equations}
In this section we formulate the travelling wave solution in an electron 
proton two-fluid model as a Hamiltonian system. The analysis is based on 
the multi-fluid equations for an electron proton fluid. To simplify the 
analysis we consider only the case where there is a charge separation 
electric field, but there is no magnetic field. The  electron and proton 
mass continuity equations for the system are:
\begin{align}
&\deriv{n_e}{t}+\nabla{\bf\cdot}(n_e {\bf u}_e)=0, \nonumber\\
&\deriv{n_p}{t}+\nabla{\bf\cdot}(n_p {\bf u}_p)=0, \label{eq:ham1}
\end{align}
where $n_e$ and $n_p$ are the electron and proton fluid number densities, 
and ${\bf u}_e$ and ${\bf u}_p$ are the electron and proton fluid velocities.
Poisson's equation for the system is:
\begin{equation}
\varepsilon_0 \nabla{\bf\cdot}{\bf E}=e(n_p-n_e). \label{eq:ham2}
\end{equation}
where ${\bf E}$ is the electric field. 
The momentum equations for the system can be written in the form:
\begin{align}
&\deriv{{\bf u}_e}{t}+{\bf u}_e{\bf\cdot}\nabla {\bf u}_e
=-\frac{1}{m_en_e}\nabla p_e-\frac{e}{m_e} {\bf E}, \nonumber\\
&\deriv{{\bf u}_p}{t}+{\bf u}_p{\bf\cdot}\nabla {\bf u}_p
=-\frac{1}{m_pn_p}\nabla p_p+\frac{e}{m_p} {\bf E}. \label{eq:ham3}
\end{align}
To complete the equation system, we assume a polytropic equation of state
for the electron and proton fluids, i.e. $p_a=p_{a0}(s_a)n_a^{\gamma_a}$ 
 where $s_a$ is the entropy ($a=e,p$). For simplicity, we assume 
that the entropies $s_a$ are constants throughout the flow (this restriction 
can be lifted if necessary). 

Poisson's equation (\ref{eq:ham2}) is related to the charge conservation 
law:
\begin{equation}
\deriv{{\rho_q}}{t}+\nabla{\bf\cdot}{\bf J}=0, \label{eq:ham4}
\end{equation}
where
\begin{equation}
\rho_q=e(n_p-n_e),\quad {\bf J}=e(n_p{\bf u}_p-n_e {\bf u}_e), \label{eq:ham5}
\end{equation}
define the charge density  $\rho_q$  and the electric current 
density ${\bf J}$. The electric current ${\bf J}$ is related  
 Ampere's law:
\begin{equation}
\nabla\times {\bf H}={\bf J}+\deriv{\bf D}{t}, \label{eq:ham6}
\end{equation} 
where ${\bf H}={\bf B}/\mu_0$ is the magnetic field strength, ${\bf B}$ is the 
magnetic field induction, and ${\bf D}=\varepsilon_0{\bf E}$  is the electric 
field displacement. Taking the divergence of Ampere's 
equation (\ref{eq:ham6}) gives the charge conservation law (\ref{eq:ham4}) 
where we identify the charge density $\rho_q=\nabla{\bf\cdot}{\bf D}$ with the 
divergence of the electric field displacement ${\bf D}$. 
The latter equation 
is equivalent to Poisson's equation (\ref{eq:ham2}).

\section{Dispersion Equation}
For the electron-ion, two fluid plasma model (\ref{eq:ham1})-(\ref{eq:ham5})
without magnetic field effects, the perturbed equations governing linear 
waves have the form:
\begin{align}
&\deriv{\delta n_i}{t}+\derv{x}(n_0 \delta u_i)=0, \nonumber\\
&\deriv{\delta n_e}{t}+\derv{x}(n_0 \delta u_e)=0, \nonumber\\
&n_0 m_i\deriv{\delta u_i}{t}=e n_0\delta E
-\frac{\gamma_i p_i}{n_0}\deriv{\delta n_i}{x}, \nonumber\\
&n_0 m_e \deriv{\delta u_e}{t}=-en_0\delta E
-\frac{\gamma_e p_e}{n_0} \deriv{\delta n_e}{x},\nonumber\\
&\deriv{\delta E}{x}=4\pi e \left(\delta n_i-\delta n_e\right), 
\label{eq:disp1}
\end{align}
where $\delta\psi$ denotes the perturbation of the physical quantity $\psi$. 
Assuming perturbations of the form $\delta\psi\propto \exp(ik x-i\omega t)$, 
the system (\ref{eq:disp1}) reduces to the algebraic equation:
\begin{equation}
k^2\left(1-\frac{\omega_{pi}^2}{\omega^2-k^2 c_i^2} 
-\frac{\omega_{pe}^2}{\omega^2-k^2 c_e^2}\right) \delta E=0. 
\label{eq:disp2}
\end{equation}
Thus, the dispersion equation for the system is:
\begin{equation}
D(k,\omega)=1-\frac{\omega_{pi}^2}{\omega^2-k^2 c_i^2}
-\frac{\omega_{pe}^2}{\omega^2-k^2 c_e^2}=0, \label{eq:disp3}
\end{equation}
where
\begin{equation}
c_i=\left(\frac{\gamma_i p_i}{n_0 m_i}\right)^{1/2},\quad 
c_e=\left(\frac{\gamma_e p_e}{n_0 m_e}\right)^{1/2}, \label{eq:disp4}
\end{equation}
are the adiabatic sound speeds for the ions ($c_i$) and electrons ($c_e$). 
In (\ref{eq:disp1}) we assume adiabatic equations of state for the 
electron and proton fluids, with adiabatic indices $\gamma_i$ and $\gamma_e$.
The quantities 
\begin{equation}
\omega_{pi}=\left(\frac{4\pi n e^2}{m_i}\right)^{1/2}\quad \hbox{and}\quad 
\omega_{pe}=\left(\frac{4\pi n e^2}{m_e}\right)^{1/2}, \label{eq:disp5}
\end{equation}
are the ion and electron plasma frequencies respectively. 

The general structure of the dispersion equation (\ref{eq:disp3}) 
may be deduced by writing (\ref{eq:disp3}) in the form:
\begin{equation}
k^2=\frac{\omega_{pi}^2}{\lambda^2-c_i^2} 
+\frac{\omega_{pe}^2}{\lambda^2-c_e^2}\quad\hbox{where}\quad 
\lambda=\frac{\omega}{k},  \label{eq:disp6}
\end{equation}
is the phase speed. A sketch of $k^2$ versus $\lambda$ (assuming $c_e>c_i$) 
reveals that the roots for the phase speed $\lambda=\omega/k$ 
are located in the ranges (a)\ $c_i^2<\lambda^2<\zeta^2$ and 
(b) $\lambda^2>c_e^2$ where
\begin{equation}
\zeta^2=\frac{c_i^2\omega_{pe}^2+c_e^2\omega_{pi}^2}{\omega_{pe}^2
+\omega_{pi}^2}.  \label{eq:disp7}
\end{equation}
The solutions (a) correspond to the ion acoustic branch of the dispersion 
equation, and range (b) corresponds to the Langmuir wave branch. The 
quantity $\zeta$ is the generalized ion acoustic speed. As $k\to\infty$, 
$\lambda\to c_i$ for branch (a) and $\lambda\to c_e$ for branch (b), 
i.e. the characteristic wave speeds at short length scales ($k\to\infty$) 
correspond to electron and ion sound waves. 

Alternatively, $\omega^2$ from (\ref{eq:disp3}) satisfies 
the biquadratic equation:
\begin{equation}
\omega^4-\alpha \omega^2+\beta=0, \label{eq:disp8}
\end{equation}
where
\begin{align}
&\alpha=k^2 c_e^2+k^2 c_i^2+\omega_{pi}^2+\omega_{pe}^2, \nonumber\\
&\beta=k^4 c_i^2c_e^2+k^2(\omega_{pi}^2 c_e^2+\omega_{pe}^2 c_i^2). 
\label{eq:disp9}
\end{align} 
The solutions of (\ref{eq:disp8}) are
\begin{equation}
\omega_\pm^2=\frac{1}{2}\left(\alpha\pm \sqrt{\Delta}\right), 
\quad \Delta=\alpha^2-4\beta.  \label{eq:disp10}
\end{equation}
The discriminant $\Delta$ reduces to:
\begin{equation}
\Delta= \left(\omega_{pe}^2+\omega_{pi}^2\right)^2
+2 k^2 (\omega_{pe}^2-\omega_{pi}^2)(c_e^2-c_i^2)+k^4 (c_e^2-c_i^2)^2. 
\label{eq:disp11}
\end{equation}
The $\omega_+^2$ root corresponds to the Langmuir wave branch (b) and 
$\omega_-^2$ corresponds to the ion-acoustic wave branch (a). 

\subsection{Long wavelength approximation}
It is instructive to investigate the dispersion equation (\ref{eq:disp8}) 
solutions in the limit as $k\to 0$. As $k\to 0$  the approximate solution 
of the dispersion equation for the ion acoustic wave to $O(k^4)$  is:
\begin{equation}
\omega=k c_{ia}-\gamma k^3, \label{eq:disp12}
\end{equation}
where
\begin{align}
&c_{ia}\equiv \zeta=\left(\frac{c_i^2 \omega_{pe}^2
+c_e^2\omega_{pi}^2}{\omega_{pe}^2+\omega_{pi}^2}\right)^{1/2}, 
\label{eq:disp13}\\
&\gamma=\frac{\omega_{pe}^2\omega_{pi}^2}{\left(\omega_{pe}^2
+\omega_{pi}^2\right)^3} \frac{(c_e^2-c_i^2)^2}{2 c_{ia}}. 
\label{eq:disp14}
\end{align}
Here $c_{ia}$ is the generalized ion-acoustic wave speed 
(in the limit $\omega_{pe}^2>>\omega_{pi}^2$ and $c_i=0$ the ion 
acoustic speed $c_{ia}=[\gamma_e p_e/(n m_i)]^{1/2}$), and 
$\gamma$ gives the dispersion of the wave. Using the association 
$\partial/\partial t\to -i\omega$ and $\partial/\partial x\to ik$ 
for perturbations $\delta\psi\propto \exp(ikx-i\omega t)$, (\ref{eq:disp12}) 
gives the linearized Korteweg de Vries (KdV) equation:
\begin{equation}
\left(\derv{t}+c_{ia}\derv{x}+\gamma \frac{\partial^3}{\partial x^3}\right) 
\delta\psi=0, \label{eq:disp15}
\end{equation}
for long-wavelength ion acoustic waves. The original derivation of the KdV 
equation describing ion acoustic solitons was derived by  
\cite{Washimi66}. 
Their derivation assumed isothermal hot electrons and cold ions. 
\cite{Moslem99,Moslem00}  derived the KdV equation for ion acoustic waves, 
including negative ions, positrons and other particle species. They assumed 
adiabatic equations of state for the different species, with adiabatic 
index of $\gamma=3$, and included a beam plasma component. 
 Other authors developed kinetic plasma models for the 
ion acoustic wave including the effects of Landau damping. In particular, 
 \cite{Ott69}  developed a nonlinear theory for a nonlinear ion 
acoustic wave. They obtained  
 a KdV equation, with a Cauchy Principal value integral 
term representing the effects of Landau damping.   

Dimensional analysis of (\ref{eq:disp12})-(\ref{eq:disp15}) gives the 
characteristic length and time scales:
\begin{equation}
L=\sqrt{\frac{\gamma}{c_{ia}}}, \quad 
T=\frac{L}{c_{ia}}, \label{eq:disp16}
\end{equation}
for the waves. In the cold ion gas limit ($c_i=0$) , and for an electron-proton plasma with $\omega_{pe}^2>>\omega_{pi}^2$, the length and time scales 
(\ref{eq:disp16}) reduce to:
\begin{equation}
L=\frac{c_e}{\omega_{pe}},\quad T=\frac{1}{\omega_{pi}}\quad\hbox{and}\quad 
c_{ia}\approx \left(\frac{\gamma_e p_e}{n m_i}\right)^{1/2}. 
\label{eq:disp17}
\end{equation}
It is interesting to note that $\gamma=0$ if $c_e=c_i$. In this case  
the ion acoustic wave is non-dispersive (see subsection 3.1.1; note however, 
for a hot plasma Landau damping is important and a plasma kinetic treatment 
is required). 

A similar analysis of the Langmuir wave branch  
 in (\ref{eq:disp8})-(\ref{eq:disp10}) for $\omega_+^2$ gives:
\begin{equation}
\omega_+^2=\omega_{pe}^2+\omega_{pi}^2 +k^2 a_2+k^4 a_4, \label{eq:disp18}
\end{equation}
where
\begin{align}
&a_2=\frac{c_e^2\omega_{pe}^2+c_i^2\omega_{pi}^2}{\omega_{pe}^2+\omega_{pi}^2}, 
\label{eq:disp19}\\
&a_4=\frac{\omega_{pe}^2\omega_{pi}^2(c_e^2-c_i^2)^2}
{(\omega_{pe}^2+\omega_{pi}^2)^3}. \label{eq:disp20}
\end{align}
Using the association $\partial_t=-i\omega$ and $\partial_x=ik$, and dropping 
the $O(k^4)$ term in (\ref{eq:disp18}) we obtain the Klein-Gordon equation:
\begin{equation}
\left(\frac{\partial^2}{\partial t^2}-a_2\frac{\partial^2}{\partial x^2} 
+\left(\omega_{pe}^2+\omega_{pi}^2\right)\right)\delta\psi=0, \label{eq:disp21}
\end{equation}
for linear wave perturbations. An alternative first order wave equation 
follows from (\ref{eq:disp18}) by noting:
\begin{equation}
\omega_+=\bar{\omega}\left[1+\frac{k^2 a_2}{2\bar{\omega}^2}+O(k^4)\right], 
\label{eq:disp22}
\end{equation}
where $\bar{\omega}=(\omega_{pe}^2+\omega_{pi}^2)^{1/2}$. The wave equation 
corresponding to (\ref{eq:disp22}) 
is the linear Schroedinger equation:
\begin{equation}
\left[i\derv{t}+\kappa\frac{\partial^2}{\partial x^2} 
-\bar{\omega}\right]\delta\psi=0,  \label{eq:disp23}
\end{equation}
where
\begin{equation}
\kappa=\frac{c_e^2\omega_{pe}^2+c_i^2\omega_{pi}^2}
{2\bar{\omega}^3}, 
\quad \bar{\omega}=\left(\omega_{pe}^2+\omega_{pi}^2\right)^{1/2}. 
\label{eq:disp24}
\end{equation}

\subsubsection{Special case $c_e=c_i$}

In this case, the solutions for $\omega_\pm^2$ in (\ref{eq:disp10}) 
reduce to:
\begin{equation}
\omega_-=\pm k c_e\quad\hbox{and}\quad \omega_+^2=k^2 c_e^2 +\omega_{pe}^2+\omega_{pi}^2. \label{eq:disp25}
\end{equation}
Thus, the ion-acoustic wave in this limit (i.e. $\omega=\omega_-=\pm k c_e$) 
reduces to a non-dispersive sound wave satisfying either of the equations:
\begin{equation}
\left(\derv{t}+c_e \derv{x}\right) \delta\psi=0\quad\hbox{or}\quad 
\left(\derv{t}-c_e \derv{x}\right) \delta\psi=0 \label{eq:disp26}
\end{equation}
corresponding to the forward and backward sound waves respectively. 

Similarly, the Langmuir wave branch solution reduces to the dispersion 
equation:
\begin{equation}
\omega_+^2=k^2 c_e^2 +\omega_{pe}^2+\omega_{pi}^2. \label{eq:disp27}
\end{equation}
The corresponding wave equation is the Klein Gordon equation:
\begin{equation}
\left(\frac{\partial^2}{\partial t^2}-c_e^2 \frac{\partial^2}{\partial x^2}
+(\omega_{pe}^2+\omega_{pi}^2)\right)\delta\psi=0. \label{eq:disp28}
\end{equation}
Equation (\ref{eq:disp28}) shows that wave dispersion for the ion-acoustic 
wave is intrinsically linked to the difference $c_e^2-c_i^2$ of the squares of 
the electron and ion sound speeds. The dispersionless limit $c_e\to c_i$ 
for the nonlinear version of the linear KdV equation (\ref{eq:disp15}) 
is not straightforward. 
The dispersionless limit of the KdV equation has been investigated by 
Levermore (1988). 

\subsubsection{Short wavelength limit}
At short wavelengths ($k\to\infty$) the dispersion equation 
solutions (\ref{eq:disp10}) reduce to:
\begin{align}
&\omega_+^2=k^2 c_e^2+\omega_{pe}^2
+\frac{\omega_{pe}^2 \omega_{pi}^2}{(c_e^2-c_i^2)} \frac{1}{k^2}
+O\left(\frac{1}{k^4}\right), \label{eq:disp29}\\
&\omega_-^2=k^2 c_i^2+\omega_{pi}^2
-\frac{\omega_{pe}^2 \omega_{pi}^2}{(c_e^2-c_i^2)} \frac{1}{k^2}
+O\left(\frac{1}{k^4}\right). \label{eq:disp30}
\end{align}
It is straightforward to write down partial differential equations (or 
integral equations) corresponding to the dispersion equation expansions 
(\ref{eq:disp29}) and (\ref{eq:disp30}) using the Fourier  correspondence:
$i\partial/\partial t\to \omega$ and $i\partial/\partial x\to -k$. 

Examples of fully nonlinear travelling waves 
of the ion-acoustic or Langmuir wave type are investigated in Section 5. 
In the next section we describe a multi-symplectic Hamiltonian formulation 
of the nonlinear travelling waves for the ion-acoustic and Langmuir 
travelling waves.

\section{Multi-Symplectic Travelling Waves}

The multi-fluid plasma system described by (\ref{eq:ham1})-(\ref{eq:ham5}) 
is a Hamiltonian system (e.g. \cite{Spencer82},
\cite{SpencerKaufman82} 
and \cite{Holm83}). The non-canonical 
Poisson bracket for the 2 fluid electron-proton two fluid plasma described 
by (\ref{eq:ham1})-(\ref{eq:ham5}) can be used to write the evolution 
equations for the system in terms of the non-canonical Eulerian physical 
variables using the Poisson bracket. We use a direct approach to write the 
travelling wave solutions in a canonical Hamiltonian form.

For travelling wave solutions of (\ref{eq:ham1})-(\ref{eq:ham5}), we look for 
solutions of the form $\psi^\alpha=\psi^\alpha(\xi)$  where the physical 
variables $\psi^\alpha$ depend only on the travelling wave variable 
$\xi=x-\lambda t$ where $\lambda$ is the speed of the travelling wave. 
Subsitution of the ansatz $\psi^\alpha=\psi^\alpha(\xi)$ into the governing 
equations (\ref{eq:ham1})-(\ref{eq:ham5}) gives the system of ordinary 
differential equations:
\begin{align}
\frac{d}{d\xi}(n_e u_e)=&\frac{d}{d\xi}(n_p u_p)=0, \label{eq:ham7}\\
u_e\frac{du_e}{d\xi}=&-\frac{1}{m_e n_e} \frac{dp_e}{d\xi}-\frac{e}{m_e} E_x, 
\label{eq:ham8}\\
u_p \frac{du_p}{d\xi}=&-\frac{1}{m_p n_p} \frac{dp_p}{d\xi}
+\frac{e}{m_p} E_x, \label{eq:ham9}\\
\frac{d}{d\xi}J=&\frac{d}{d\xi}[e(n_pu_p-n_eu_e)]=0, \label{eq:ham10}\\
\varepsilon_0\frac{dE_x}{d\xi}=&e(n_p-n_e), \label{eq:ham11}
\end{align}
Here $u_e$ and $u_p$ now refer to the $x$-component of the fluid velocity in 
the travelling waves frame, and $E_x$ is the $x$-component of the electric 
field.  

Integration of the number density continuity equations and the current 
continuity equation (\ref{eq:ham9}) gives the 
integrals:
\begin{equation}
n_pu_p=n_e u_e=j, \label{eq:ham12}
\end{equation}
where $j$ is the common integration constant for the particle fluxes, that 
ensures that the electric current $J$ is zero. Multiplying (\ref{eq:ham8}) 
by $m_e n_e$ and (\ref{eq:ham9}) by $m_pn_p$, adding the two 
resultant equations, and using Poisson's equation to integrate the electric 
field term $eE_x(n_p-n_e)$ gives the $x-$momentum integral for the system as:
\begin{equation}
j(m_e u_e+m_p u_p)+p_e+p_p-\varepsilon_0 \frac{E_x^2}{2}=P_x, 
\label{eq:ham13}
\end{equation}
where $P_x$ is the total $x$-momentum integration constant. Similarly 
multiplying (\ref{eq:ham8}) by $m_e n_e u_e$ and (\ref{eq:ham9}) by 
$m_p n_p u_p$, adding the two equations, and integrating the 
resultant equation with respect to $x$ 
gives the total energy flux conservation equations for the system in the form:
\begin{equation}
 j\left(\frac{1}{2}m_e u_e^2+\frac{\gamma_e p_e}{(\gamma_e-1) n_e}\right)
+j\left(\frac{1}{2}m_p u_p^2+\frac{\gamma_p p_p}{(\gamma_p-1) n_p}\right) 
=\varepsilon, \label{eq:ham14}
\end{equation}
In the integration of (\ref{eq:ham14}) we assumed polytropic equations of 
state for the electron and proton gases, pressures, i.e. 
\begin{equation}
p_e=p_{e0}\left(\frac{n_e}{n_{e0}}\right)^{\gamma_e}
\quad\hbox{and}\quad p_p=p_{p0} \left(\frac{n_p}{n_{p0}}\right)^{\gamma_p}.
\label{eq:ham15}
\end{equation}
Using (\ref{eq:ham11}) and the equations of state (\ref{eq:ham15}) allows 
the electron and proton $x-$momentum equations to be written in the forms:
\begin{equation}
\frac{(u_e^2-c_e^2)}{u_e}\frac{du_e}{d\xi}=-\frac{e}{m_e} E_x, 
\quad\hbox{and}\quad
\frac{(u_p^2-c_p^2)}{u_p}\frac{du_p}{d\xi}=\frac{e}{m_p} E_x, \label{eq:ham16}
\end{equation}
where $c_e^2=\gamma_e p_e/(n_e m_e)$ and $c_p^2=\gamma_p p_p/(n_p m_p)$ 
are the squares of the electron and proton sound speeds respectively. 
Equations (\ref{eq:ham16}) can also be written in the form:
\begin{equation}
\frac{d\varepsilon_e}{d\xi}=-\frac{e}{m_e}E_x,\quad\hbox{and}\quad 
\frac{d\varepsilon_p}{d\xi}=\frac{e}{m_p}E_x, \label{eq:ham17}
\end{equation}
where 
\begin{equation}
\varepsilon_e=\frac{1}{2}u_e^2+\frac{c_e^2}{\gamma_e-1}
\quad\hbox{and}\quad
\varepsilon_p=\frac{1}{2}u_p^2+\frac{c_p^2}{\gamma_p-1}, \label{eq:ham18}
\end{equation}
are the kinetic and enthalpy contributions to the normalized energy fluxes  
of the electron and proton gases. Multiplying the first 
equation (\ref{eq:ham17}) by $jm_e$  and the second equation (\ref{eq:ham17}) 
by $j m_p$ and integrating with respect to $x$ gives the total energy 
equation (\ref{eq:ham14}). 

\subsection{Variational Principles}
\begin{proposition}\label{prop1}
The electron-proton multi-fluid travelling waves 
described by (\ref{eq:ham7})-(\ref{eq:ham18}) are described by the 
stationary point conditions for the variational functional:
\begin{equation}
{\cal A}=\int L d\xi, \label{eq:v1}
\end{equation}
where the Lagrangian density $L$ is:
\begin{equation}
L=E_x \frac{d\tilde{\varepsilon}_e}{d\xi}-[\Pi_x(u_e,u_p,E_x)-P_x]+
\lambda \left(\tilde{\varepsilon}_e+\tilde{\varepsilon}_p
-\tilde{\varepsilon}\right). \label{eq:v2}
\end{equation}
Here
\begin{equation}
\tilde{\varepsilon}_e=\frac{\varepsilon_0 m_e}{e}\varepsilon_e, 
\quad
\tilde{\varepsilon}_p=\frac{\varepsilon_0 m_p}{e}\varepsilon_p, \quad 
\tilde{\varepsilon}=\frac{\varepsilon_0}{j e}\varepsilon, \label{eq:v3}
\end{equation}
are the normalized electron and proton energy fluxes 
defined in (\ref{eq:ham18}), and $\varepsilon$ is the total energy integration 
constant in (\ref{eq:ham14}), $j=n_e u_e=n_pu_p$ and 
$\Pi_x(u_e,u_p,E_x)=P_x$ 
is the $x$-momentum integral (\ref{eq:ham13}). The Lagrange multiplier 
$\lambda$  ensures that the energy conservation integral (\ref{eq:ham14}) 
is satisfied. 
\end{proposition}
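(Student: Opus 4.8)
The plan is to verify by direct variation that the Euler--Lagrange equations of $\mathcal{A}=\int L\,d\xi$ reproduce the travelling-wave system (\ref{eq:ham7})--(\ref{eq:ham18}). I would first fix the independent fields to be $u_e$, $u_p$, $E_x$ and the multiplier $\lambda$, and note the crucial structural feature of $L$: by (\ref{eq:v3}) and (\ref{eq:ham18}) we have $\tilde{\varepsilon}_e=\tilde{\varepsilon}_e(u_e)$ and $\tilde{\varepsilon}_p=\tilde{\varepsilon}_p(u_p)$, so the only field derivative appearing anywhere in $L$ is $u_e'=du_e/d\xi$, entering through $E_x\,d\tilde{\varepsilon}_e/d\xi=E_x\,(\partial\tilde{\varepsilon}_e/\partial u_e)\,u_e'$. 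Consequently the variations with respect to $\lambda$, $E_x$ and $u_p$ are purely algebraic, and only the $u_e$ variation carries a derivative term. Throughout I would use the equation of state (\ref{eq:ham15}) together with $n_a=j/u_a$ from (\ref{eq:ham12}), which give the book-keeping identities $\partial\varepsilon_a/\partial u_a=(u_a^2-c_a^2)/u_a$ and $\partial\Pi_x/\partial u_a=jm_a(u_a^2-c_a^2)/u_a^2$ for $a=e,p$.

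Next I would dispatch the three algebraic variations. Varying $\lambda$ returns the constraint $\tilde{\varepsilon}_e+\tilde{\varepsilon}_p=\tilde{\varepsilon}$, which by (\ref{eq:v3}) is exactly the energy flux integral (\ref{eq:ham14}). Varying $E_x$ gives $d\tilde{\varepsilon}_e/d\xi=\partial\Pi_x/\partial E_x$; since $\partial\Pi_x/\partial E_x=-\varepsilon_0 E_x$ from (\ref{eq:ham13}), this reads $d\tilde{\varepsilon}_e/d\xi=-\varepsilon_0 E_x$, which after inserting $\tilde{\varepsilon}_e=(\varepsilon_0 m_e/e)\varepsilon_e$ collapses to the electron equation $d\varepsilon_e/d\xi=-(e/m_e)E_x$ of (\ref{eq:ham17}). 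Differentiating the energy constraint with respect to $\xi$ and substituting this result then yields the companion proton equation $d\varepsilon_p/d\xi=(e/m_p)E_x$; in the form (\ref{eq:ham16}) these are the electron and proton momentum balances. Varying $u_p$ gives $-\partial\Pi_x/\partial u_p+\lambda\,\partial\tilde{\varepsilon}_p/\partial u_p=0$, and after cancelling the common factor $m_p(u_p^2-c_p^2)$ this fixes the multiplier as $\lambda=je/(\varepsilon_0 u_p)$.

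The one nontrivial variation is that of $u_e$. Here I would form $\partial L/\partial u_e-(d/d\xi)(\partial L/\partial u_e')$, with $\partial L/\partial u_e'=E_x\,\partial\tilde{\varepsilon}_e/\partial u_e$, and observe that the two second-derivative contributions $E_x\,(\partial^2\tilde{\varepsilon}_e/\partial u_e^2)\,u_e'$ --- one from differentiating $E_x\,(\partial\tilde{\varepsilon}_e/\partial u_e)\,u_e'$ with respect to $u_e$, the other from the total $\xi$-derivative of $\partial L/\partial u_e'$ --- cancel identically, leaving the first-order relation $-\partial\Pi_x/\partial u_e+(\lambda-dE_x/d\xi)\,\partial\tilde{\varepsilon}_e/\partial u_e=0$ (the boundary term generated by the implicit integration by parts vanishes for the decaying, solitary-wave fields of interest). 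Cancelling $m_e(u_e^2-c_e^2)$ as before gives $\lambda-dE_x/d\xi=je/(\varepsilon_0 u_e)$. Subtracting this from the $u_p$ relation and using $n_a=j/u_a$ then reproduces Poisson's equation $\varepsilon_0\,dE_x/d\xi=e(n_p-n_e)$ of (\ref{eq:ham11}); the flux integral (\ref{eq:ham12}) itself is built into the ansatz through $n_a=j/u_a$.

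Finally I would account for the momentum integral (\ref{eq:ham13}), which is not an independent Euler--Lagrange equation: since $L$ carries no explicit $\xi$-dependence, the associated conserved quantity $H=u_e'\,\partial L/\partial u_e'-L$ equals $\Pi_x-P_x$ on the constraint surface, so $\Pi_x$ is constant along solutions and naming that constant $P_x$ is precisely (\ref{eq:ham13}). The step I expect to be the main obstacle is the $u_e$ variation: one must carry the polytropic derivatives of $p_e$ and $c_e^2$ correctly and verify the exact cancellation of the $E_x\,(\partial^2\tilde{\varepsilon}_e/\partial u_e^2)\,u_e'$ terms, since any slip there would spuriously leave a $dE_x/d\xi$ coupling in the electron balance and destroy the match with (\ref{eq:ham11}); the remaining verifications are routine once the identities for $\partial\varepsilon_a/\partial u_a$ and $\partial\Pi_x/\partial u_a$ are in hand.
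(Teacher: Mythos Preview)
Your proposal is correct and follows essentially the same route as the paper: vary $\mathcal{A}$ with respect to $\lambda$, $u_p$, $u_e$, $E_x$ to obtain, respectively, the energy constraint, the value of the multiplier, Poisson's equation, and the electron momentum equation, and then deduce the proton equation by differentiating the constraint. Two minor differences are worth noting: your multiplier comes out as $\lambda=e n_p/\varepsilon_0$ rather than the paper's $\lambda=n_p$ (the discrepancy is a harmless normalisation---the paper's (\ref{eq:v5}) drops an $\varepsilon_0/e$ factor), and your recovery of the momentum integral (\ref{eq:ham13}) via the Noether/Hamiltonian observation that $L$ has no explicit $\xi$-dependence is a cleaner argument than the paper's, which simply cites the integration leading to (\ref{eq:ham13}).
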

\begin{proof}
The stationary point conditions $\delta{\cal A}/\delta\lambda=0$, 
$\delta{\cal A}/\delta u_p=0$, $\delta{\cal A}/\delta u_e=0$ give 
the equations:
\begin{align}
{\cal A}_\lambda=&\tilde{\varepsilon}_e+\tilde{\varepsilon}_p
-\tilde{\varepsilon}=0, \label{eq:v4}\\
{\cal A}_{u_p}=&m_p\left(1-\frac{c_p^2}{u_p^2}\right)
\left(\lambda u_p-j\right)=0, \label{eq:v5}\\
{\cal A}_{u_e}=&m_e\left(1-\frac{c_e^2}{u_e^2}\right)
\left(-\frac{\varepsilon_0}{e}\frac{d E_x}{d\xi}
-(n_e-\lambda)\right)=0, \label{eq:v6}
\end{align}
where we use the notation
 ${\cal A}_{\psi}\equiv\delta {\cal A}/\delta \psi$ for the variational 
derivative of ${\cal A}$ with respect to the variable $\psi$. 
Equation (\ref{eq:v4}) is equivalent to the energy 
integral (\ref{eq:ham14}). Equation (\ref{eq:v5}) determines 
the Lagrange multiplier $\lambda$, i.e., 
\begin{equation}
\lambda=\frac{j}{u_p}=n_p. \label{eq:v7}
\end{equation}
Using $\lambda=n_p$  in (\ref{eq:v6}) gives the 
Poisson equation (\ref{eq:ham11}), i.e., 
\begin{equation}
\varepsilon_0\frac{dE_x}{d\xi}=e(n_p-n_e). \label{eq:v8}
\end{equation}
The variational equation 
\begin{equation}
{\cal A}_{E_x}=\frac{\varepsilon_0 m_e}{e}\left(\frac{d\varepsilon_e}{d\xi}
+\frac{e}{m_e} E_x\right)=0, \label{eq:v9}
\end{equation}
gives the electron momentum equation in (\ref{eq:ham17}). 
The remaining equations of the system follow from the above variational 
equations. For example, differentiation of the energy integral (\ref{eq:v4}) 
with respect to $\xi$, coupled with the electron momentum 
equation (\ref{eq:v9}) gives the proton momentum equation in (\ref{eq:ham17}). 
The total momentum equation (\ref{eq:ham13}) follows by the integration 
of a suitable combination of the electron and proton momentum equations  
as indicated in (\ref{eq:ham18}) et seq.. This completes the proof.
\end{proof}

\begin{proposition}\label{prop2}
The electron energy equation in (\ref{eq:ham17}) and Poisson's equation 
(\ref{eq:ham11}) can be written in the Hamiltonian form:
\begin{align}
&\frac{d\tilde{\varepsilon}_e}{d\xi}=\deriv{P_x}{E_x}, \label{eq:ham19}\\
&\frac{dE_x}{d\xi}=-\deriv{P_x}{\tilde{\varepsilon}_e}, \label{eq:ham20}
\end{align}
where the partial derivatives of the total momentum function $P_x$ are taken 
keeping the total energy flux $\varepsilon$ in (\ref{eq:ham14}) constant, 
and 
\begin{align}
&\tilde{\varepsilon}_e=\frac{\varepsilon_0 m_e}{e}\varepsilon_e\equiv 
\frac{\varepsilon_0 m_e}{e}\left[\frac{1}{2}u_e^2
+\frac{c_e^2}{\gamma_e-1}\right],\nonumber\\ 
&\tilde{\varepsilon}_p=\frac{\varepsilon_0 m_p}{e}\varepsilon_p\equiv 
\frac{\varepsilon_0 m_p}{e}\left[\frac{1}{2}u_p^2
+\frac{c_p^2}{\gamma_p-1}\right], \label{eq:ham21}
\end{align}
are re-normalized energy fluxes of the electron and proton gases.
\end{proposition}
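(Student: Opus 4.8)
The plan is to treat $(\tilde\varepsilon_e, E_x)$ as the two phase-space coordinates, with the total energy flux $\varepsilon$ (equivalently $\tilde\varepsilon=\varepsilon_0\varepsilon/(je)$) entering only as a fixed parameter, and to verify the two Hamilton equations by direct differentiation. The first step is to exhibit the momentum-flux function $P_x$ in (\ref{eq:ham13}) genuinely as a function of $(\tilde\varepsilon_e, E_x)$. Using the flux integrals (\ref{eq:ham12}), $n_e=j/u_e$ and $n_p=j/u_p$, so the polytropic law (\ref{eq:ham15}) makes $p_e, c_e^2$ functions of $u_e$ alone and $p_p, c_p^2$ functions of $u_p$ alone; consequently $\varepsilon_e$ in (\ref{eq:ham18}) depends only on $u_e$. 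I would invert this relation (valid away from the sonic point $u_e^2=c_e^2$, where the map is monotonic) to recover $u_e$ from $\tilde\varepsilon_e$, and then use the energy constraint (\ref{eq:v4}), $\tilde\varepsilon_e+\tilde\varepsilon_p=\tilde\varepsilon$, to obtain $\tilde\varepsilon_p$ and hence $u_p$. Thus every quantity in (\ref{eq:ham13}) becomes a function of $(\tilde\varepsilon_e, E_x)$ at fixed $\varepsilon$.

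For (\ref{eq:ham19}) the computation is immediate. Holding $\tilde\varepsilon_e$ fixed forces $u_e$ fixed, and holding $\varepsilon$ fixed then forces $\tilde\varepsilon_p$, hence $u_p$, fixed; so the only $E_x$-dependence of $P_x$ is the explicit term $-\varepsilon_0 E_x^2/2$, giving $\partial P_x/\partial E_x=-\varepsilon_0 E_x$. On the other hand $d\tilde\varepsilon_e/d\xi=(\varepsilon_0 m_e/e)\,d\varepsilon_e/d\xi=-\varepsilon_0 E_x$ by the electron energy equation (\ref{eq:ham17}). The two agree.

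The substantive step is (\ref{eq:ham20}). The key elementary identities, obtained from the polytropic scalings $c_e^2\propto u_e^{-(\gamma_e-1)}$ and $p_e\propto u_e^{-\gamma_e}$ (with $n_e=j/u_e$), are $d\varepsilon_e/du_e=(u_e^2-c_e^2)/u_e$ and $dp_e/du_e=-j m_e c_e^2/u_e^2$, and similarly for the proton. Differentiating (\ref{eq:ham13}) at fixed $E_x$, and remembering that the constraint ties $\tilde\varepsilon_p$ to $\tilde\varepsilon_e$ through $\partial\tilde\varepsilon_p/\partial\tilde\varepsilon_e=-1$, the electron contributions $j m_e\,\partial u_e/\partial\tilde\varepsilon_e+\partial p_e/\partial\tilde\varepsilon_e$ collapse to $e n_e/\varepsilon_0$, while the proton contributions, carrying the extra minus sign from the constraint, collapse to $-e n_p/\varepsilon_0$. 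Hence $\partial P_x/\partial\tilde\varepsilon_e=(e/\varepsilon_0)(n_e-n_p)$, so that $-\partial P_x/\partial\tilde\varepsilon_e=(e/\varepsilon_0)(n_p-n_e)=dE_x/d\xi$ by Poisson's equation (\ref{eq:ham11}), as required.

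I expect the main obstacle to be bookkeeping rather than anything conceptual: one must be scrupulous about which variables are held fixed in each partial derivative, since the energy constraint couples the proton variables to $\tilde\varepsilon_e$, and it is precisely the proton terms, entering with the opposite sign through $\partial\tilde\varepsilon_p/\partial\tilde\varepsilon_e=-1$, that supply the $-n_p$ needed to reproduce the charge density $e(n_p-n_e)$. An equivalent and perhaps cleaner route would read the result off the variational principle of Proposition \ref{prop1}: the Lagrangian (\ref{eq:v2}) has the canonical first-order form $E_x\,d\tilde\varepsilon_e/d\xi$ minus a function of the fields, so $E_x$ is the momentum conjugate to $\tilde\varepsilon_e$ and the Euler--Lagrange equations are already Hamilton's equations with $P_x$ (the momentum-flux function subject to the energy constraint enforced by $\lambda$) as the Hamiltonian; the direct verification above is then the explicit check that this Legendre structure is consistent.
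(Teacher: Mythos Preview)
Your proof is correct and follows essentially the same route as the paper: both verify (\ref{eq:ham19}) immediately from the explicit $-\varepsilon_0 E_x^2/2$ term, and both obtain (\ref{eq:ham20}) by direct differentiation of $P_x$ subject to the energy constraint, arriving at $\partial P_x/\partial\tilde\varepsilon_e=(e/\varepsilon_0)(n_e-n_p)$. The only organisational difference is that the paper parametrises the constrained derivative through $u_e$, computing $(\partial P_x/\partial u_e)_{u_p}+(\partial P_x/\partial u_p)(du_p/du_e)$ and then dividing by $d\tilde\varepsilon_e/du_e$, whereas you work species-by-species using $\partial\tilde\varepsilon_p/\partial\tilde\varepsilon_e=-1$ and the symmetric collapse $jm_a\,\partial u_a/\partial\tilde\varepsilon_a+\partial p_a/\partial\tilde\varepsilon_a=e n_a/\varepsilon_0$; the two bookkeepings are equivalent, and your version makes the electron--proton symmetry slightly more transparent.
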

\begin{proof}
To prove (\ref{eq:ham19}) note from (\ref{eq:ham17}) and (\ref{eq:ham13}) that
\beqn
\frac{d\tilde{\varepsilon}_e}{d\xi}=-\varepsilon_0 E_x\quad \hbox{and}\quad 
\deriv{P_x}{E_x}=-\varepsilon_0 E_x. \label{eq:ham22}
\eeqn
These two equations imply (\ref{eq:ham19}). 

To prove (\ref{eq:ham20}) note that as the total energy flux $\varepsilon$
is held constant, that $u_p=u_p(u_e)$ is a function of $u_e$. Hence 
\begin{equation}
\deriv{P_x}{\tilde{\varepsilon}_e}
=\deriv{P_x}{u_e}/\frac{d\tilde{\varepsilon}_e}{du_e}, \quad 
\deriv{P_x}{u_e}=\left(\deriv{P_x}{u_e}\right)_{u_p}
+\deriv{P_x}{u_p}\frac{du_p}{du_e}. \label{eq:ham23}
\end{equation}
Separately evaluating the different terms in (\ref{eq:ham23}) we obtain:
\begin{align}
&\frac{d\tilde{\varepsilon}_e}{d u_e}=\frac{\varepsilon_0 m_e}{e} 
\frac{d\varepsilon_e}{du_e}
=\frac{\varepsilon_0 m_e}{e}\frac{(u_e^2-c_e^2)}{u_e}, \nonumber\\
&\left(\deriv{P_x}{u_e}\right)_{u_p}
=\frac{jm_e(u_e^2-c_e^2)}{u_e^2}, 
\quad \deriv{P_x}{u_p}=\frac{jm_p(u_p^2-c_p^2)}{u_p^2}, \nonumber\\
&\frac{du_p}{du_e}=-\frac{m_eu_p(u_e^2-c_e^2)}{m_p u_e(u_p^2-c_p^2)}, \quad
\deriv{P_x}{u_e}=\frac{j m_e(u_e^2-c_e^2)}{u_e^2}
\left(1-\frac{n_p}{n_e}\right). \label{eq:ham24}
\end{align}
Using the results (\ref{eq:ham24}) in (\ref{eq:ham23}) gives 
\begin{equation}
\deriv{P_x}{\tilde{\varepsilon}_e}=\frac{e(n_e-n_p)}{\varepsilon_0}
=-\frac{dE_x}{d\xi}. \label{eq:ham25}
\end{equation}
This completes the proof. 
\end{proof}

\begin{proposition}\label{prop3}
The electron proton multi-fluid travelling waves 
(\ref{eq:ham7})-(\ref{eq:ham18}) can be obtained from the stationary point
conditions for the variational functional:
\begin{equation}
{\cal A}=\int L_2\ d\tau, \label{eq:vd1}
\end{equation}
where the Lagrangian density $L_2$ has the form:
\begin{equation}
L_2=E_x\frac{d\tilde{\varepsilon}_e}{d\tau}+\left[E(u_e,u_p)
-\varepsilon\right] +\lambda\left[\Pi_x(u_e,u_p,E_x)-P_x\right]. 
\label{eq:vd2}
\end{equation}
In (\ref{eq:vd2}) 
\begin{equation}
E(u_e,u_p)=\varepsilon, \quad \Pi_x(u_e,u_p,E_x)=P_x, \label{eq:vd3}
\end{equation}
are the total energy and momentum conservation laws (\ref{eq:ham14}) and 
(\ref{eq:ham13}) respectively, and 
$\tilde{\varepsilon}_a=\varepsilon_0 (m_a\varepsilon_a)/e$ ($a=e,p$). The 
Lagrange multiplier $\lambda$ in (\ref{eq:vd2}) ensures that the momentum 
conservation equation (\ref{eq:ham13}) is satisfied. The parameter
\begin{equation}
\tau=\int_{\xi_0}^{\xi} \frac{d\xi}{u_p}, \label{eq:vd4}
\end{equation}
is the time travelled by the proton fluid from some fiducial point $\xi_0$ 
to the position $\xi$ in the wave frame. Thus, 
\begin{equation}
\frac{d\tilde{\varepsilon}_e}{d\tau}=u_p \frac{d\tilde{\varepsilon}_e}{d\xi}. 
\label{eq:vd5}
\end{equation}
\end{proposition}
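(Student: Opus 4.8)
The plan is to treat Proposition~\ref{prop3} as the exact dual of Proposition~\ref{prop1}, with the roles of the energy and momentum integrals interchanged and the proton time $\tau$ replacing $\xi$ as the evolution variable. Concretely, I would compute the four stationary point conditions $\delta\mathcal{A}/\delta\lambda$, $\delta\mathcal{A}/\delta u_p$, $\delta\mathcal{A}/\delta E_x$ and $\delta\mathcal{A}/\delta u_e$, using throughout the flux integrals (\ref{eq:ham12}) to regard the number densities as $n_a=j/u_a$ and the pressures $p_a$ as functions of $u_a$ alone, so that $dp_a/du_a=-j m_a c_a^2/u_a^2$ follows from the equations of state (\ref{eq:ham15}) and $d\varepsilon_a/du_a=(u_a^2-c_a^2)/u_a$. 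The only term carrying a $\tau$-derivative is the coupling term $E_x\,d\tilde{\varepsilon}_e/d\tau$, so only the $u_e$ variation requires an integration by parts; the remaining three variations are purely algebraic.

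I would start with the two easy conditions. Varying $\lambda$ immediately returns the momentum constraint $\Pi_x(u_e,u_p,E_x)=P_x$ of (\ref{eq:ham13}). Varying $u_p$ involves only $E(u_e,u_p)$ and $\Pi_x$; using $\partial E/\partial u_p=j m_p (u_p^2-c_p^2)/u_p$ and $\partial\Pi_x/\partial u_p=j m_p (u_p^2-c_p^2)/u_p^2$, the condition factorises as $j m_p (u_p^2-c_p^2)/u_p^2\,(u_p+\lambda)=0$, which (away from the sonic point $u_p^2=c_p^2$) fixes the Lagrange multiplier as $\lambda=-u_p$, the dual of the result $\lambda=n_p$ obtained in Proposition~\ref{prop1}.

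Next I would carry out the $E_x$ variation. Since $\Pi_x$ contributes $\partial\Pi_x/\partial E_x=-\varepsilon_0 E_x$ and the coupling term contributes $d\tilde{\varepsilon}_e/d\tau$, the condition reads $d\tilde{\varepsilon}_e/d\tau-\lambda\varepsilon_0 E_x=0$. Substituting $\lambda=-u_p$ together with the chain rule (\ref{eq:vd5}), $d\tilde{\varepsilon}_e/d\tau=u_p\,d\tilde{\varepsilon}_e/d\xi$, and cancelling the common factor $u_p$ recovers the electron energy/momentum equation $d\tilde{\varepsilon}_e/d\xi=-\varepsilon_0 E_x$ of (\ref{eq:ham17}). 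The decisive step is the $u_e$ variation: integration by parts on $\int E_x\,(d\tilde{\varepsilon}_e/du_e)(du_e/d\tau)\,d\tau$ produces the term $-(dE_x/d\tau)(d\tilde{\varepsilon}_e/du_e)$, and combining this with $\partial E/\partial u_e$ and $\lambda\,\partial\Pi_x/\partial u_e$, factoring out $m_e(u_e^2-c_e^2)/u_e$, and inserting $\lambda=-u_p$, $j/u_e=n_e$ and $j=n_pu_p$, the surviving bracket collapses to $(\varepsilon_0/e)\,dE_x/d\tau=u_p(n_p-n_e)$, which after the chain rule (\ref{eq:vd5}) is exactly Poisson's equation (\ref{eq:ham11}). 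The remaining proton momentum equation in (\ref{eq:ham17}) then follows, as in Proposition~\ref{prop1}, by differentiating the energy integral along $\xi$ and eliminating $E_x$ with the electron equation just obtained.

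I expect the main obstacle to be bookkeeping rather than principle: keeping the implicit dependences $n_a=j/u_a$ and $p_a=p_a(u_a)$ straight when forming $\partial E/\partial u_a$ and $\partial\Pi_x/\partial u_a$, executing the single integration by parts with the correct sign and a vanishing boundary term, and converting consistently between $d/d\tau$ and $d/d\xi=u_p^{-1}\,d/d\tau$ so that the factors of $u_p$ cancel cleanly to leave the $\xi$-form of both Poisson's equation and the electron energy equation.
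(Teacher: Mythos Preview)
Your plan is essentially the same as the paper's: it computes the four variational derivatives in the same way, obtains $\lambda=-u_p$ from the $u_p$ variation, the electron energy equation from the $E_x$ variation, and Poisson's equation from the $u_e$ variation via a single integration by parts, exactly as the paper does in (\ref{eq:vd6})--(\ref{eq:vd11}).

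There is one small logical slip in your closing step. You write that the proton momentum equation ``follows, as in Proposition~\ref{prop1}, by differentiating the energy integral along $\xi$''. But in this dual formulation the energy integral $E(u_e,u_p)=\varepsilon$ is \emph{not} one of the stationary point conditions: only the momentum integral $\Pi_x=P_x$ is enforced (via $\delta\mathcal{A}/\delta\lambda=0$). So you cannot yet differentiate the energy integral, because you have not yet shown it holds. The correct dual of the Prop.~\ref{prop1} argument is to differentiate the \emph{momentum} constraint (\ref{eq:ham13}) with respect to $\xi$, use Poisson's equation to replace $\varepsilon_0 E_x\,dE_x/d\xi$ by $eE_x(n_p-n_e)$, and subtract the electron contribution using the electron equation you already have; this yields the proton momentum equation, after which the energy integral (\ref{eq:ham14}) follows by integration. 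The paper's proof is equally terse on this point (``The above equations and the total momentum conservation equation (\ref{eq:ham13}) imply the two-fluid equations''), so your level of detail is fine once the correct integral is differentiated.
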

\begin{proof}
Evaluating $\delta{\cal A}/\delta u_p$, $\delta{\cal A}/\delta E_x$, 
and $\delta {\cal A}/\delta u_e$ gives the equations:
\begin{align}
&{\cal A}_{u_p}=j m_p\left(1-\frac{c_p^2}{u_p^2}\right)
\left(\lambda+u_p\right)=0, \label{eq:vd6}\\
&{\cal A}_{E_x}=\frac{d\tilde{\varepsilon}_e}{d\tau}
-\lambda(\varepsilon_0 E_x)=0,  \label{eq:vd7}\\
&{\cal A}_{u_e}=-\left(\frac{\varepsilon_0 m_e}{e}\right) 
u_e\left(1-\frac{c_e^2}{u_e^2}\right)
\left[\frac{dE_x}{d\tau}-u_p\frac{e}{\varepsilon_0} (n_p-n_e)\right]=0. 
\label{eq:vd8}
\end{align}
Solving (\ref{eq:vd6}) for $\lambda$ gives:
\begin{equation}
\lambda=-u_p. \label{eq:vd9}
\end{equation}
Equations (\ref{eq:vd7}) and (\ref{eq:vd8}) reduce to:
\begin{align}
&\frac{d\tilde{\varepsilon}_e}{d\tau}=-u_p(\varepsilon_0 E_x)
\quad\hbox{or}\quad 
\frac{d\tilde{\varepsilon}_e}{d\xi}=-(\varepsilon_0 E_x), \label{eq:vd10}\\
&\frac{dE_x}{d\tau}=u_p\frac{e}{\varepsilon_0} (n_p-n_e)\quad\hbox{or}\quad 
\frac{dE_x}{d\xi}=\frac{e}{\varepsilon_0} (n_p-n_e). \label{eq:vd11}
\end{align}
Thus, (\ref{eq:vd7})-(\ref{eq:vd8}) or (\ref{eq:vd10})-(\ref{eq:vd11}) 
are equivalent to the electron momentum equation in (\ref{eq:ham17}) 
and to Poisson's equation (\ref{eq:ham11}). 
The equation $\delta{\cal A}/\delta\lambda=0$ gives the total $x$-momentum 
conservation law (\ref{eq:ham13}). The above equations and the total momentum 
conservation equation (\ref{eq:ham13}) imply the two-fluid equations 
(\ref{eq:ham7})-(\ref{eq:ham17}) for the travelling waves. This completes the
proof.
\end{proof}
\begin{remark}
The present formulation of the action for the travelling waves 
clearly differs from that in Proposition \ref{prop1}, where 
the evolution operator is the spatial operator $d/d\xi$ which 
is different from the time evolution operator $d/d\tau=u_p d/d\xi$ 
in the present formulation.  
\end{remark}

\begin{proposition}
The electron momentum equation  (\ref{eq:ham17}) and Poisson's equation 
(\ref{eq:ham11}) can be written in the Hamiltonian form:
\begin{align}
&\frac{d\tilde{\varepsilon}_e}{d\tau}=-\deriv{E}{E_x}, \label{eq:vd12}\\
&\frac{dE_x}{d\tau}=\deriv{E}{\tilde{\varepsilon}_e}, \label{eq:vd13}
\end{align}
where
\begin{equation}
\tau= \int^\xi \frac{d\xi}{u_p}, \label{eq:vd14}
\end{equation}
is the evolution variable and $E(u_p,u_e)=\varepsilon$ is the energy integral 
(\ref{eq:ham14}). The partial derivatives of $E(u_e,u_p)$ are taken keeping the momentum integral $\Pi_x(u_e,u_p,E_x)$ constant 
where $\Pi_x(u_e,u_p,E_x)=P_x$ is the momentum integral (\ref{eq:ham13}). 
\end{proposition}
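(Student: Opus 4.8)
The plan is to mirror the proof of Proposition \ref{prop2}, but with the roles of the energy integral $E=\varepsilon$ and the momentum integral $\Pi_x=P_x$ interchanged, and with the spatial operator $d/d\xi$ replaced by the time operator $d/d\tau=u_p\,d/d\xi$ introduced in (\ref{eq:vd5}). As in Proposition \ref{prop2}, I would regard $(\tilde{\varepsilon}_e,E_x)$ as the canonical pair: $\tilde{\varepsilon}_e$ fixes $u_e$ through (\ref{eq:ham21}), while now the constraint $\Pi_x(u_e,u_p,E_x)=P_x=\mathrm{const}$ fixes $u_p$ as a function of $u_e$ and $E_x$. The essential structural difference from Proposition \ref{prop2} is that here the Hamiltonian $E=\varepsilon(u_e,u_p)$ carries no explicit $E_x$ dependence, so every occurrence of $E_x$ in the derivatives of $E$ enters only implicitly through $u_p=u_p(u_e,E_x)$ on the momentum constraint surface.

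First I would establish (\ref{eq:vd12}). On the left, (\ref{eq:vd5}) together with $d\tilde{\varepsilon}_e/d\xi=-\varepsilon_0 E_x$ from (\ref{eq:ham22}) immediately gives $d\tilde{\varepsilon}_e/d\tau=-\varepsilon_0 u_p E_x$. On the right, I would compute $\partial E/\partial E_x$ holding $u_e$ (i.e. $\tilde{\varepsilon}_e$) and $P_x$ fixed, so that $\partial E/\partial E_x=(\partial\varepsilon/\partial u_p)(\partial u_p/\partial E_x)$. Implicit differentiation of $\Pi_x=P_x$ at fixed $u_e$ gives $\partial u_p/\partial E_x=-(\partial P_x/\partial E_x)/(\partial P_x/\partial u_p)$, and substituting $\partial P_x/\partial E_x=-\varepsilon_0 E_x$ together with the expressions $\partial\varepsilon/\partial u_p=jm_p(u_p^2-c_p^2)/u_p$ and $\partial P_x/\partial u_p=jm_p(u_p^2-c_p^2)/u_p^2$ (the latter from (\ref{eq:ham24})) collapses the result to $\partial E/\partial E_x=\varepsilon_0 u_p E_x$. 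Hence $d\tilde{\varepsilon}_e/d\tau=-\partial E/\partial E_x$, which is (\ref{eq:vd12}).

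Next I would establish (\ref{eq:vd13}) by the analogous chain-rule computation for $\partial E/\partial\tilde{\varepsilon}_e$ at fixed $E_x$ and $P_x$. Writing $\partial E/\partial\tilde{\varepsilon}_e=(dE/du_e)/(d\tilde{\varepsilon}_e/du_e)$, the total derivative $dE/du_e$ along the constraint picks up both $\partial\varepsilon/\partial u_e$ and $(\partial\varepsilon/\partial u_p)(\partial u_p/\partial u_e)$, where $\partial u_p/\partial u_e=-(\partial P_x/\partial u_e)/(\partial P_x/\partial u_p)$ is again read off from the momentum constraint using the derivatives in (\ref{eq:ham24}). After the algebra, which parallels the computation of $\partial P_x/\partial\tilde{\varepsilon}_e$ in Proposition \ref{prop2}, the $(u_e^2-c_e^2)$ factors cancel against $d\tilde{\varepsilon}_e/du_e$, and using $j=n_eu_e=n_pu_p$ the result reduces to $\partial E/\partial\tilde{\varepsilon}_e=(e/\varepsilon_0)\,u_p(n_p-n_e)$. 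Comparing with the time-differentiated Poisson equation $dE_x/d\tau=u_p(e/\varepsilon_0)(n_p-n_e)$ from (\ref{eq:vd11}) yields (\ref{eq:vd13}).

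The step I expect to be the main obstacle is the bookkeeping of the implicit differentiation on the constraint surface: because $E$ depends on $E_x$ only through $u_p$, one must be scrupulous about which variable ($u_e$, $E_x$, or $P_x$) is held fixed at each stage, and must verify that the Jacobian factors $\partial u_p/\partial E_x$ and $\partial u_p/\partial u_e$ are evaluated consistently with the momentum rather than the energy constraint. Once these are pinned down, the cancellations are identical in spirit to those in Proposition \ref{prop2}, so no genuinely new estimate is required; the content lies in correctly transposing the constrained partial derivatives from the $\varepsilon$-fixed ensemble of Proposition \ref{prop2} to the $P_x$-fixed ensemble used here.
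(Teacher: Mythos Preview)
Your proposal is correct and follows essentially the same route as the paper's own proof: both compute $\partial E/\partial E_x$ and $\partial E/\partial\tilde{\varepsilon}_e$ by treating $u_p=u_p(u_e,E_x)$ as determined implicitly by the momentum constraint $\Pi_x=P_x$, then apply the chain rule with exactly the partial derivatives you list, arriving at $\partial E/\partial E_x=\varepsilon_0 u_p E_x$ and $\partial E/\partial\tilde{\varepsilon}_e=(e/\varepsilon_0)u_p(n_p-n_e)$. The only cosmetic difference is that the paper states $\partial u_p/\partial E_x$ directly rather than writing it as $-(\partial P_x/\partial E_x)/(\partial P_x/\partial u_p)$, but this is the same implicit-differentiation step you describe.
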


\begin{proof}
Since $\Pi_x$ is constant, then the momentum 
integral $\Pi_x(u_e,u_p,E_x)=P_x$ can be solved for $u_p=u_p(u_e,E_x)$ 
as a function of $u_e$ and $E_x$. To compute $\partial E/\partial E_x$ in 
(\ref{eq:vd12}) we note
\begin{equation}
\left(\deriv{E}{E_x}\right)_{u_e}
=\deriv{E}{u_p} \deriv{u_p}{E_x}, \label{eq:vd15}   
\end{equation}
Using the results:
\begin{equation}
\deriv{u_p}{E_x}=\frac{\varepsilon_0 E_x u_p}{n_pm_p(u_p^2-c_p^2)}, 
\quad \left(\deriv{E}{u_p}\right)_{u_e}
=n_p m_p (u_p^2-c_p^2), \label{eq:vd16}
\end{equation}
(\ref{eq:vd15}) reduces to:
\begin{equation}
\deriv{E}{E_x}=u_p\varepsilon_0 E_x. \label{eq:vd17}
\end{equation}
The electron momentum equation (\ref{eq:ham17}) is:
\begin{equation}
\frac{d\varepsilon_e}{d\xi}=-\frac{e}{m_e} E_x. \label{eq:vd18}
\end{equation}
Equations (\ref{eq:vd17}) and (\ref{eq:vd18}) imply Hamilton's  equation 
(\ref{eq:vd12}). 

To compute $\partial E/\partial\tilde{\varepsilon}_e$ we note:
\begin{equation}
\deriv{E}{\tilde{\varepsilon}_e}
=\deriv{E}{u_e}/\frac{d\tilde{\varepsilon}_e}{du_e},
\quad \left(\deriv{E}{u_e}\right)_{E_x}
=\left(\deriv{E}{u_e}\right)_{u_p}+
\deriv{E}{u_p}\left(\deriv{u_p}{u_e}\right)_{E_x}. \label{eq:vd19}
\end{equation}
Separately computing the terms in (\ref{eq:vd19}) we obtain:
\begin{align}
&\frac{d\tilde{\varepsilon}_e}{d u_e}=\frac{\varepsilon_0 m_e}{e} 
\frac{d\varepsilon_e}{du_e}
=\frac{\varepsilon_0 m_e}{e}\frac{(u_e^2-c_e^2)}{u_e}, \quad
\left(\deriv{E}{u_e}\right)_{u_p}
=\frac{jm_eu_e(u_e^2-c_e^2)}{u_e^2}, \nonumber\\
&\deriv{E}{u_p}=\frac{jm_pu_p(u_p^2-c_p^2)}{u_p^2}, 
\quad \frac{du_p}{du_e}=-\frac{m_eu_p(u_e^2-c_e^2)}{m_p u_e(u_p^2-c_p^2)},
\nonumber\\
&\deriv{E}{u_e}=\frac{j m_e (u_e^2-c_e^2)}{u_e}
\left(1-\frac{n_e}{n_p}\right). \label{eq:vd20}
\end{align}
Using (\ref{eq:vd20}) in (\ref{eq:vd19}) gives:
\begin{equation}
\deriv{E}{\tilde{\varepsilon}_e}=u_p 
\left[\frac{e}{\varepsilon_0}(n_p-n_e)\right] 
\equiv u_p \frac{dE_x}{d\xi}=\frac{dE_x}{d\tau}, \label{eq:vd21}
\end{equation}
which establishes the Hamiltonian equation (\ref{eq:vd13}) for 
$dE_x/d\tau$. This completes the proof.
\end{proof}

\section{Integral form of steady-state solitary wave signatures in a multi-fluid plasma}
We consider a steady-state, multi-fluid plasma system, viewed in the wave 
frame, where the bulk velocity of each plasma species 
is $\hat{x}u_0$ as $x\to -\infty$ and the wave form appears stationary 
so that $\p/\p t \to 0$. 
Thus the momentum equations and Poisson's equation are
\begin{align}
\frac{d}{dx}\left(p_j + m_jn_ju_j^2\right) &= q_jn_jE_x;  \label{eqn:SSmom} \\
\frac{dE_x}{dx} &= 4\pi \sum_{j}q_jn_j. \label{eqn:SSPoisson} 
\end{align}
By adopting the McKenzie approach of writing dimensionless velocities as $u_j\to u_j/u_0$ and using the continuity $$n_ju_j = n_{j0}u_0$$ and adiabatic energy $$p_j u_j^{\gamma_j} = p_{j0}u_{0}^{\gamma_j}$$ relations, the momentum equation (\ref{eqn:SSmom}) can be written as
\be\label{eqn:dimumom}
\frac{d}{dx}\left[ \frac{u_j^2}{2}+\frac{u_j^{1-\gamma_j}}{(\gamma_j-1)M_j^2} \right]=\frac{q_jE_x}{m_ju_0^2},
\ee
where the 0-subscript denotes a constant upstream state and
\[
M_j = \frac{u_0}{c_{j0}}, \quad c_{j0}^2 = \frac{\gamma_j p_{j0}}{m_j n_{j0}},
\] is the $j$'th species sound speed Mach number. Integrating equation (\ref{eqn:SSmom}), using (\ref{eqn:SSPoisson}), yields
\begin{align}
E_x &= \pm\sqrt{8\pi u_0^2\sum_{j}m_jn_{j0}P(u_j)}, \label{eqn:ExsqofP} \\
P(u_j) &= (u_j-1)+\frac{1}{\gamma_j M_j^2}(u^{-\gamma_j}-1), \label{eqn:Pofu}
\end{align}
which gives the electric field in terms of conservation of total momentum and where the generalized momentum functions $P_j$ are composed of two terms: the first associated with the dynamic pressure and the second with the thermal pressure of the $j$'th species. Introducing the electrostatic potential $d\phi/dx=-E_x$ and again integrating the momentum equation (\ref{eqn:dimumom}) yields conservation of energy 
\begin{align}
\frac{m_ju_0^2}{q_j}\varepsilon(u_j) &= \frac{m_iu_0^2}{q_i}\varepsilon(u_i) = -\phi, \label{eqn:EgPropEg} \\
\varepsilon(u_j) &= \frac{u_j^2-1}{2} + \frac{(u_j^{1-\gamma_j}-1)}{(\gamma_j-1)M_j^2}, \label{eqn:Egofu}
\end{align}
where the first equation (\ref{eqn:EgPropEg}) expresses the energy proportionality 
between the $j$'th and $i$'th species, and (\ref{eqn:Egofu}) defines the Bernoulli 
energy function $\varepsilon_j$, composed of two terms: the first associated with dynamic energy 
and the second with the enthalpy of the species. The relations (\ref{eqn:ExsqofP}) 
through (\ref{eqn:Egofu}) yield the wave amplitude and necessary conditions for a solitary 
wave to exist.  As summarized in Table \ref{tab:acdc}, the energy proportionality 
relations (\ref{eqn:EgPropEg}) and (\ref{eqn:Egofu}), can be employed to determine whether a 
fluid species will accelerate or decelerate from the initial point
 (\cite{McKenzie02}, \cite{Verheest04b}) 
\begin{table}
\begin{center}
\begin{tabular}{l|c|c|c} 
\hline\hline
charge & Mach number & potential & velocity from IP \\
\hline
\multirow{4}{*}{ions ($q>0$)} & $M_i>1$ & $\phi >0$ & decelerates \\
                              & $M_i>1$ & $\phi <0$ & accelerates \\
                              & $M_i<1$ & $\phi >0$ & accelerates \\
                              & $M_i<1$ & $\phi <0$ & decelerates \\
\hline
\multirow{4}{*}{electrons ($q<0$)} & $M_e>1$ & $\phi >0$ & accelerates \\
                                   & $M_e>1$ & $\phi <0$ & decelerates \\
                                   & $M_e<1$ & $\phi >0$ & decelerates \\
                                   & $M_e<1$ & $\phi <0$ & accelerates \\
\hline\hline
\end{tabular}
\end{center}
\caption{Velocity behavior  at the initial point (IP), acceleration or deceleration as 
determined from equation (\ref{eqn:EgPropEg}), for ions and electron species. 
$\phi>0$ indicates a potential hill whereas $\phi<0$ indicates 
a potential well. 
}\label{tab:acdc}
\end{table}

The functions $P_j$ and $\varepsilon_j$ can be used to calculate solitary wave signatures. Using equations (\ref{eqn:EgPropEg}) and (\ref{eqn:Egofu}), the velocity of any plasma species $i$ can be expressed as a function of the $j$'th species in the form $u_i=u_i(u_j)$. Thus all the $i$'th momentum functions can be written in terms of the $j$'th species velocity, $P_i=P(u_i(u_j))$, so that the electric field can also be expressed as a function of the same single species velocity: $E_x=E_x(u_j)$. The above procedure can always (in principle) be carried out, although it is sometimes more convenient to do so numerically. 

We write the structure equation (\ref{eqn:dimumom}) as 
\be\label{eqn:intsol}
x = \int_{u_{c}}^{u_j}\frac{u \left( 1-\frac{1}{M_j^2u^{\gamma_j+1}} \right)\mathrm{d}u}{q_jE_x/m_ju_0^2},
\ee
where $u_{c}$ is the velocity of species $j$ at the center of the wave where $x\equiv 0$. Note that since $E_x=E_x(u_j)$, equation (\ref{eqn:intsol}) can be used directly to integrate solitary wave signatures for a plasma system composed of any number of different fluid species. Thus equations (\ref{eqn:ExsqofP}) through (\ref{eqn:Egofu}) along with the structure equation (\ref{eqn:intsol}) form a complete description for solitary waves in an adiabatic, multi-fluid plasma. However the determination of existence conditions and integration of the structure equation is not generally straightforward and should be treated carefully on a case-by-case basis.

\subsection{Integration of a plasma composed of cold protons and hot electrons}
In this sub-section we illustrate the utility of the closed system of multi-fluid conservation laws by considering a two-fluid plasma composed of cold (highly supersonic, $T_{p}=0$) protons and hot (subsonic) electrons. The utility of the structure equation (\ref{eqn:intsol}) is exploited to integrate exact solutions, revealing properties particularly important for particle reflection--the width and amplitude of the wave, both of which are found to depend critically on the Mach number of the incident flow.  

On using the normalizations,
\begin{align}\label{eqn:SSdimvars}
N &= n_p/n_0, & n &= n_e/n_0, & U &= u_p/u_0, & u &= u_e/u_0, \nonumber \\
E_x &\to E_x/E_0, & x & \to x/\lambda_D, & E_0 &= \frac{k T_{e0}}{e\lambda_D}, & \lambda_D^2 &= \frac{kT_{e0}}{4\pi n_0e^2}, \nonumber \\
\alpha_0 &= m_e/m_p, & \alpha_1^2 &= \frac{\alpha_0}{\gamma M_{e}^2}, 
\end{align}
equation (\ref{eqn:intsol}) becomes
\begin{align}\label{eqn:dimintsol}
x &= \int_{u_{c}}^{u}f_{\pm}(u)du; \\ 
f_{\pm}(u) &= \frac{\left(\frac{1}{M_e^2 u^{\gamma+1}}-1\right)\alpha_0 u}{\pm\sqrt{2\alpha_1^2\left(U-1+\alpha_0(u-1)+\alpha_1^2(u^{-\gamma}-1) \right)}}, \nonumber 
\end{align}
where, in view of equation (\ref{eqn:EgPropEg}) and (\ref{eqn:Egofu}),
\be
U=\sqrt{1+\alpha_0(1-u^2)+\frac{2\gamma\alpha_1^2}{\gamma-1}(1-u^{1-\gamma})}
\ee 
and $u_{c}$ is the electron velocity at the center of the solitary wave where $x=0$. This expression explicitly retains the electron to proton mass ratio $\alpha_0$.

The ion acoustic Mach number $M_{ia}$ is defined  by the equation:
\be
M_{ia}^2=\frac{u_0^2}{c_{ia}^2}\quad
\hbox{where}\quad c_{ia}^2=\frac{c_i^2 m_i+c_e^2 m_e}{m_i+m_e}. 
\label{eq:5.12}
\ee
Expression (\ref{eq:5.12}) for $c_{ia}^2$ is equivalent to the result 
(\ref{eq:disp13}) for $c_{ia}^2$. It is also useful to define the Mach number 
$M_{ep}^2$ used by \cite{McKenzie02}:
\be
M_{ep}^2=\frac{u_0^2}{c_{ep}^2}\quad\hbox{where}
\quad c_{ep}^2=\frac{\gamma_e p_e}{n m_p}\equiv 
\frac{\gamma_e k_B T_e}{m_p}, \label{eq:5.13} 
\ee
The relationship between $M_{ia}^2$ and $M_{ep}^2$ is:
\be
M_{ia}^2=(1+\alpha_0) M_{ep}^2. \label{eq:5.14}
\ee
Note if $\alpha_0=m_e/m_p<<1$ then $M_{ia}^2\approx M_{ep}^2$. 

The Mach number regime over which solitary wave solutions exist can be 
determined by the methods used by \cite{McKenzie02}) and 
\cite{Verheest04b}.
The ion-acoustic Mach number $M_{ia}$ is restricted to the range:
\be
1<M_{ia}^2<M_{max}^2. \label{eq:5.15}
\ee

The lower limit $M_{ia}=1$ in (\ref{eq:5.15}) corresponds to a weakly nonlinear ion acoustic wave, i.e. to a linear ion-acoustic wave. For a weakly nonlinear, 
long wavelength ion acoustic wave to have spatial wave growth and decay for 
the wave envelope like $\exp(\pm\kappa x)$, requires $M_{ia}^2>1$ (see below). 
Note that in a dispersive soliton or a solitary wave the nonlinearity 
is balanced by the dispersion. To derive the condition 
$M_{ia}>1$ we consider solutions of the linear dispersion equation (\ref{eq:disp3}) in conjunction with the travelling wave condition $\omega=k u_0$ where 
$u_0$ is the speed of the weak travelling wave (cf. \cite{McKenzie-etal04}). 
  Thus, we look for solutions of the equations:
\be
k^2=\frac{\omega_{pi}^2}{\lambda^2-c_i^2}
+\frac{\omega_{pe}^2}{\lambda^2-c_e^2}\quad\hbox{where}
\quad \lambda=\frac{\omega}{k}=u_0. \label{eq:5.16}
\ee
For the sake of simplicity we assume $c_e^2>c_i^2$ (i.e. hot electrons 
and cooler ions). It is straightforward to prove in this case that 
\be
c_i^2<c_{ia}^2<c_e^2, \label{eq:5.17}
\ee
where the ion acoustic speed $c_{ia}$ is given by (\ref{eq:5.12}). 
Eliminating reference to $\omega$ in (\ref{eq:5.16}) we obtain:
\be
k^2=\frac{(\omega_{pi}^2+\omega_{pe}^2)(u_0^2-c_{ia}^2)}
{(u_0^2-c_i^2)(u_0^2-c_e^2)}.  \label{eq:5.18}
\ee
A sketch of $k^2$ versus $u_0^2$ reveals that if $c_{ia}^2<u_0^2<c_e^2$ 
then $k^2<0$, (i.e. the dispersion equation has pure imaginary solutions for 
$k=i\kappa$). Thus $k^2<0$ if: 
\be
1<M_{ia}^2<\frac{c_e^2}{c_{ia}^2}\equiv 
\frac{c_e^2(1+\alpha_0)}{(c_i^2+c_e^2\alpha_0)}. \label{eq:5.19}
\ee
Hence $M_{ia}^2>1$ is required for spatial wave growth 
and decay as occurs for a soliton envelope. The upper limit on the 
Mach number $M_{ia}^2$ requires  
the use of the fully nonlinear equations of the system. The determination 
of $M_{max}$ is discussed below.

The upper bound, 
$M_{max}$,  in (\ref{eq:5.15})  comes from requiring $E_x=0$ at an 
equilibrium point of the flow (assumed to occur at the center of the wave). 
For a compressive ion acoustic solitary wave, 
the equilibrium point $u_p=u_{eq}$ 
is such that $u_{eq}>u_{sonic}$ where $u_{sonic}=u_0 M_{p}^{-2/(\gamma_p+1)}$
is the value of $u_p$ at the proton sonic point. Thus, the basic equations 
to determine $(M_{ep})_{max}=M_c$ are:
\begin{align}
&E_x^2=\frac{2 u_0^2 n_0}{\varepsilon_0} \left[m_e P_e(\bar{u}_e)
+m_p P_p({\bar u}_p)\right]=0, \nonumber\\
&m_e\varepsilon_e(\bar{u}_e)+m_p \varepsilon_p({\bar u}_p)=0.  \nonumber\\
&u_p=u_{sonic}=u_0 M_{p}^{-2/(\gamma_p+1)}, \label{eq:5.20}
\end{align}
where ${\bar u}_j=u_j/u_0$ ($j=e,p$), and 
\begin{align}
&P_j({\bar u}_j)={\bar u}_j-1+\frac{1}{\gamma_j M_j^2} 
\left({\bar u}_j^{-\gamma_j}-1\right), \quad (j=e,p), \nonumber\\
&\varepsilon_j({\bar u}_j)=\frac{1}{2}\left({\bar u}_j^2-1\right) 
+\frac{1}{(\gamma_j-1) M_j^2} \left({\bar u}_j^{-(\gamma_j+1)}-1\right), 
\quad (j=e,p), \label{eq:5.21}
\end{align} 
are the momentum and energy functions for species $j$ ($j=e,p$). 

For arbitrary $\gamma_e$, $M_{max}$ can be calculated implicitly as done for 
the curves corresponding to $\gamma_e = 5/3$ and $\gamma_e = 1$ of 
Figure \ref{fig:Mvsa0}.

For the case of a cold proton gas ($M_p\to\infty$ or $p_p=0$)
and $\gamma_e=2$, and a finite electron mass ($\alpha\neq 0$),  
(\ref{eq:5.20})-(\ref{eq:5.21}) admit analytical solutions 
for ${\bar u}_e$ and $M_{ep}$ (McKenzie (2002) also obtained 
the analytical solution for $\gamma_e=2$ and $\alpha_0=0$). 

 In the above case, (\ref{eq:5.20}) reduce to:
\begin{align}
&{\bar u}_p-1+\frac{1}{\gamma_e M_{ep}^2}
\left({\bar u}_e^{-\gamma_e}-1\right)
+\alpha_0\left({\bar u}_e-1\right)=0, \label{eq:5.22}\\
&\alpha_0\frac{1}{2}\left({\bar u}_e^2-1\right) 
+\frac{1}{(\gamma_e-1) M_{ep}^2}\left({\bar u}_e^{1-\gamma_e}-1\right)
+\frac{1}{2}\left({\bar u}_p^2-1\right)=0,  \label{eq:5.23}\\
&{\bar u}_p=0\quad \hbox{(proton fluid  sonic point condition)}. 
\label{eq:5.24}
\end{align}

For $\gamma_e=2$, (\ref{eq:5.22})-(\ref{eq:5.23}) has the root
\be
 {\bar u}_e=\frac{3}{2\tilde{M}^2+1} 
\quad\hbox{where}\quad \tilde{M}^2=(1+\alpha_0) M_{ep}^2, \label{eq:5.25}
\ee
and $\tilde{M}^2$ satisfies the equation:
\be
\left(2\tilde{M}^2+1\right)^3-9\left(2\tilde{M}^2+1\right)^2
+\frac{54\alpha_0}{(\alpha_0+1)} \tilde{M}^2=0. \label{eq:5.26}
\ee
Equation (\ref{eq:5.26}) can be written in the form:
\be
\left(\tilde{M}^2-1\right)^3-b\left(\tilde{M}^2-1\right)-b=0\quad\hbox{where}\quad b=\frac{27}{4(1+\alpha_0)}. \label{eq:5.27}
\ee
Using standard formulae for the solution of a cubic equation (
{Abramowitz65}, formula 3.8.2, p. 17), (\ref{eq:5.27}) has one real root 
for $\theta=\tilde{M}^2-1$. This real root gives $\tilde{M}^2\equiv M_{max}^2$:
\be
M_{max}^2=1+s_1+s_2 \quad\hbox{where}\quad 
s_{1,2}=\frac{3}{2}\left[\frac{1}{(1+\alpha_0)} 
\left(1\pm\left(\frac{\alpha_0}{(1+\alpha_0)}\right)^{1/2}
\right)\right]^{1/3}. \label{eq:5.28}
\ee
For the massless electron limit $\alpha_0\to 0$ (\ref{eq:5.28}) gives 
$M_{max}=2$ which agrees with the result of \cite{McKenzie02}.

\begin{figure}
\begin{center}
\includegraphics[width=1.0\textwidth, height=0.45\textheight]{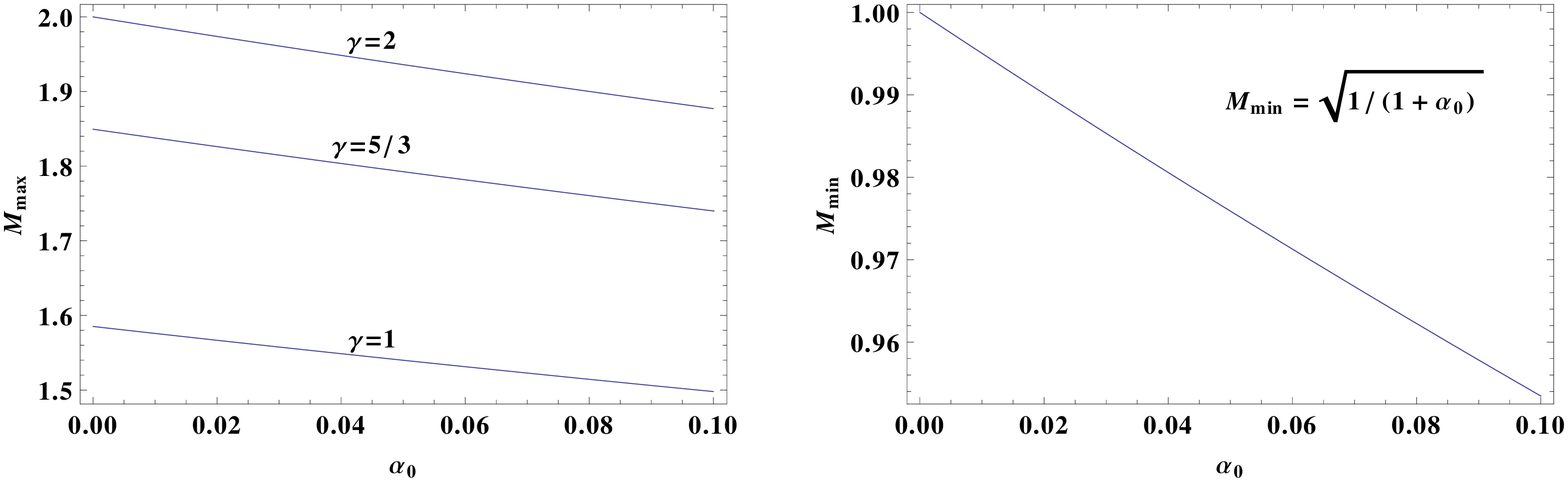}
\end{center}
\caption{Electron-ion Mach number $M_{ep}$ (see (\ref{eq:5.13})) 
 range as a function of electron 
to ion mass ratio, for which solitary wave solutions exist, for 
several choices of the electron adiabatic gas index $\gamma$.}\label{fig:Mvsa0}
\end{figure}

Figure \ref{fig:sgntrM1p4} illustrates a typical solitary wave signature resulting from the integration of equation (\ref{eqn:dimintsol}), the ($f_+$) kernel of which is plotted in Figure \ref{fig:fofu}.
\begin{figure}
\includegraphics[width=1.0\textwidth, height=0.5\textheight]{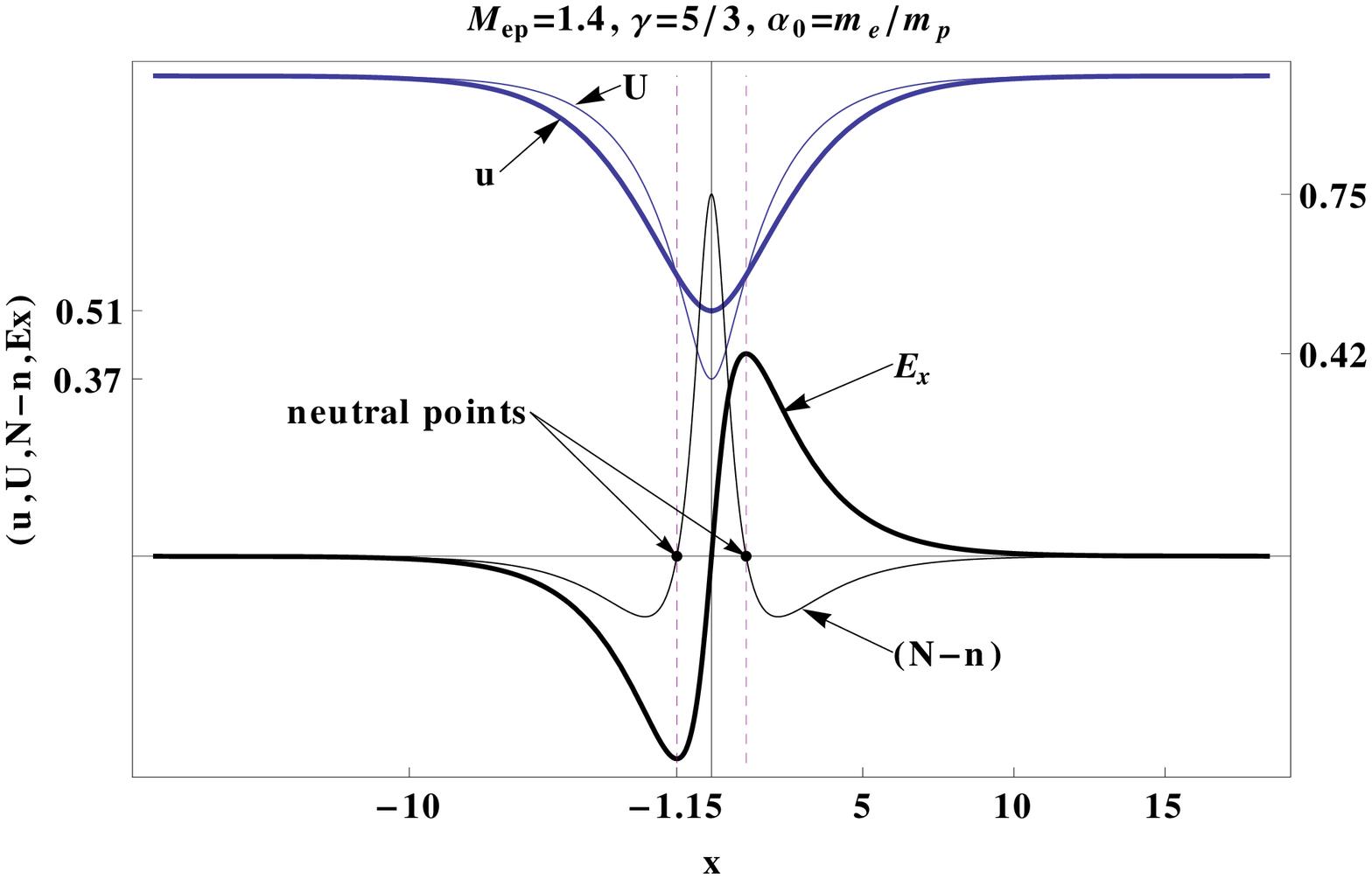}
\caption{Typical ion-acoustic solitary wave signature.}\label{fig:sgntrM1p4}
\end{figure}
Figure \ref{fig:sgntrM1p4} reveals several properties of ion-acoustic solitons. Both fluids are decelerated in the first part of the wave, consistent with a potential hill (see Table \ref{tab:acdc}), with the protons running ahead of the electrons, up to the neutral point where the electrons overtake the protons and the electric field reaches an extrema. At the center of the wave, where velocities are minimum, $E_x=0$ and the charge density reaches a maximum. As suggested by Figure \ref{fig:SolEnvlp}, for zero temperature protons the charge density and slope of the electric field at the wave center actually tend to infinity as $M_{ep}\to M_{max}$. The wave behavior described by Figure \ref{fig:sgntrM1p4} is related to the formation of a cross-shock potential in that the protons, which due to their mass, are able to penetrate deeper into the wave then the electrons, creating charge imbalance and subsequent electro-static forces which act to restore charge neutrality and prevent either species from `running away' in the flow. The charge imbalance (which here tends to infinity at the critical Mach number owing to the zero proton pressure) that occurs throughout the wave is worth emphasizing in that it underlines the care required when imposing conditions of quasi-charge neutrality in plasma-fluid models.

It can be shown (\cite{Verheest04b}) that, solitary waves, associated with potential wells, which result from integrating the $f_{\pm}$ upper-branch (see Figure \ref{fig:fofu}) are unphysical. To see this, note that the area under the curve is positive to the left and negative to the right of the electron sonic point, which means that the solution curve becomes double valued there. The most physically plausible way to find a single valued, upper-branch solution $u=u(x)$, then, is to place $u_c$ at the sonic point and patch together solution branches that satisfy the original ODEs and boundary conditions. However, as can be shown by evaluating $dE_x^2/du_i=0$, $E_x$ has extrema at sonic points and neutral points which contradicts the requirement that $E_x=0$ at the wave center. 
 
Several solitary wave profiles are plotted in Figure \ref{fig:sltrywvs} for the case $\gamma = 5/3$ and $\alpha_0 = 0.0005446$ and for Mach numbers ranging from 1.01 to 1.8. This figure illustrates graphically the nonlinear wave steepening and amplification that occurs with increasing Mach number. Note that near the critical Mach number (here $M_{ep} = 1.84886$) the flow becomes completely `choked' as the gradients of the plasma parameters approach infinity. The charge density curve ($N-n$) for the near critical Mach number ($M_{ep}=1.8$) case has been purposely omitted from the plots since its amplitude (of about 30 in normalized units) is so large that its inclusion would obscure the fine structure of the other charge density curves.

An additional property of solitary waves (perhaps not clearly visible in Figure \ref{fig:sltrywvs}) is that they become more narrow with increasing Mach number. To see this, the full width at half minimum (FWHM) of the electron velocity is plotted verses Mach number in Figure \ref{fig:fwhm}. Evidently, the wave amplitude increases (Figure \ref{fig:SolEnvlp}) and its structure narrows (Figure \ref{fig:fwhm}), both with increasing Mach number. The, not surprising, implication of this behavior is that waves associated with faster flow should be better able to act as particle reflectors.

\begin{figure}
\begin{center}
\includegraphics[width=0.75\textwidth, height=0.75\textheight]{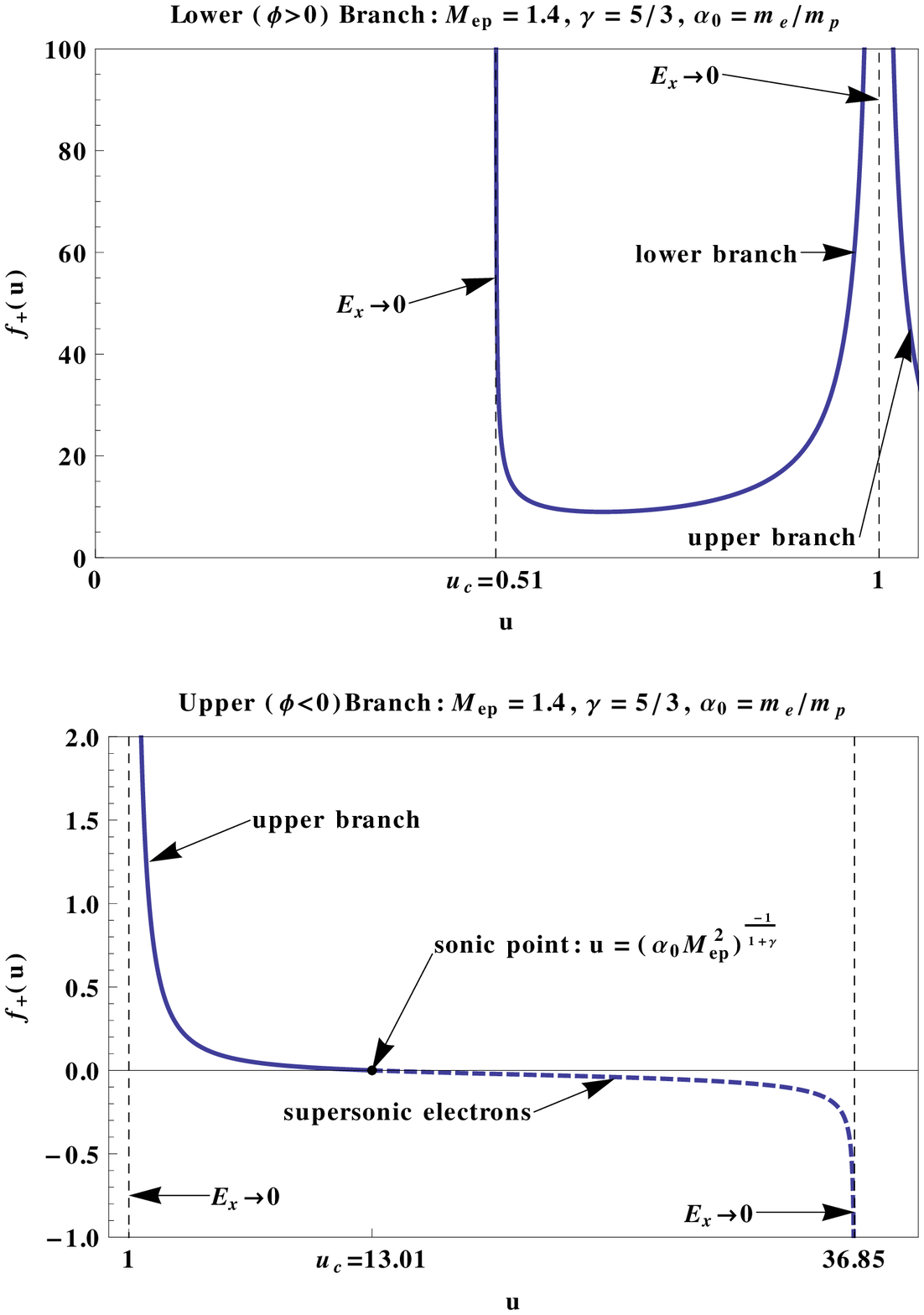}
\end{center}
\caption{The kernel of the structure equation (\ref{eqn:dimintsol}) plotted as a function of electron velocity (where the plus-sign has been selected). The vertical separatrix at $x=1$ separate the solution space into lower (physical) and upper (unphysical) branches.}\label{fig:fofu}
\end{figure}

\begin{figure}
\begin{center}
\includegraphics[width=0.75\textwidth, height=0.75\textheight]{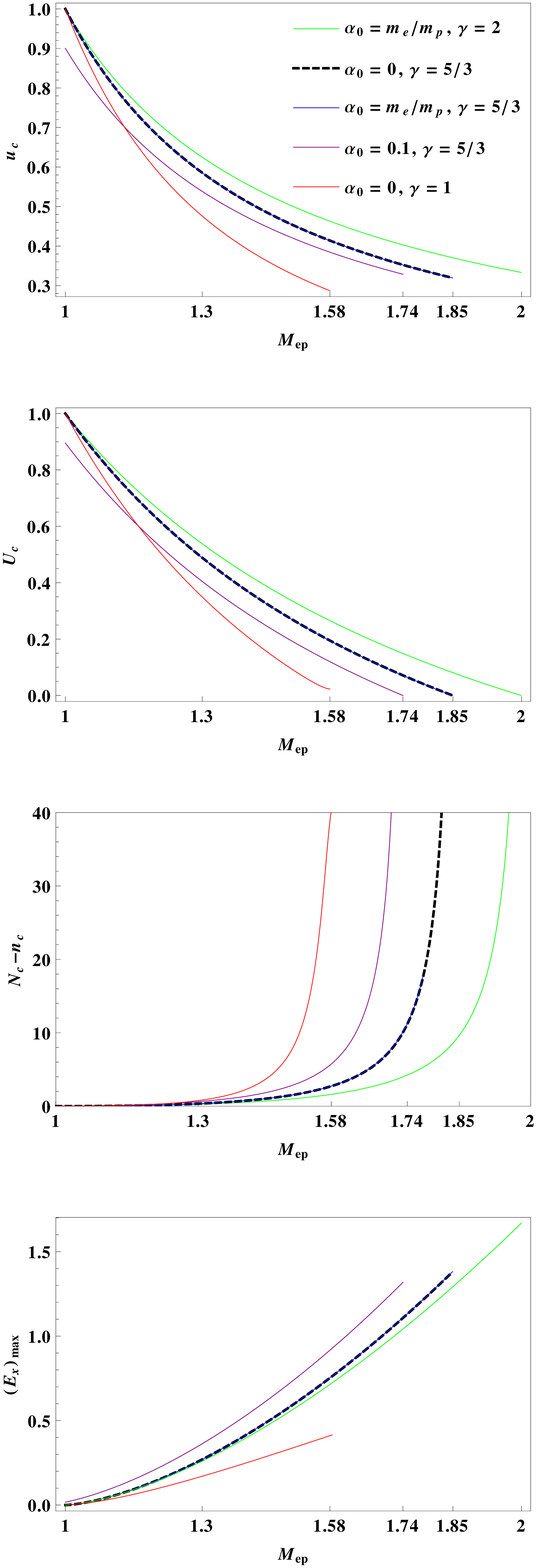}
\end{center}
\caption{Wave amplitude as a function of Mach number $M_{ep}$, 
determined by setting $E_x=0$ (at the center of the wave) in equations (\ref{eqn:ExsqofP}) through (\ref{eqn:Egofu}) to obtain the maximum values of $U$ and $u$, and by noting that the electric field reaches a maximum at the charge neutral points where $u=U$.}\label{fig:SolEnvlp}
\end{figure}

\begin{figure}
\includegraphics[width=1.0\textwidth, height=0.75\textheight]{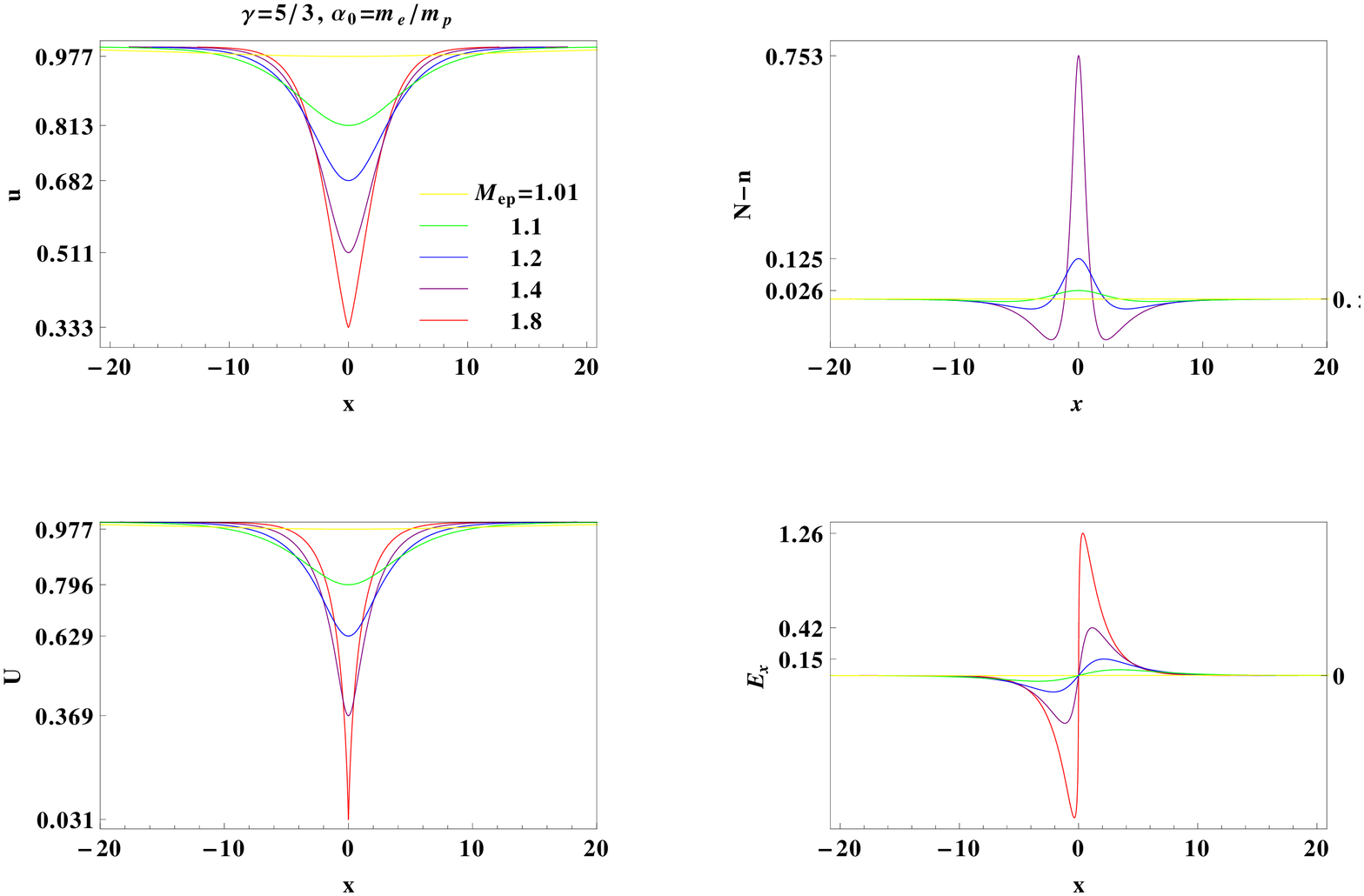}
\caption{Ion-acoustic solitary wave signatures for 
$\gamma\equiv\gamma_e = 5/3$ and $\alpha_0 = 0.0005446$, and Mach numbers 
$M_{ep}$ ranging from 1.01 to 1.8.}\label{fig:sltrywvs}
\end{figure}

\begin{figure}
\begin{center}
\includegraphics[width=0.45\textwidth, height=0.275\textheight]{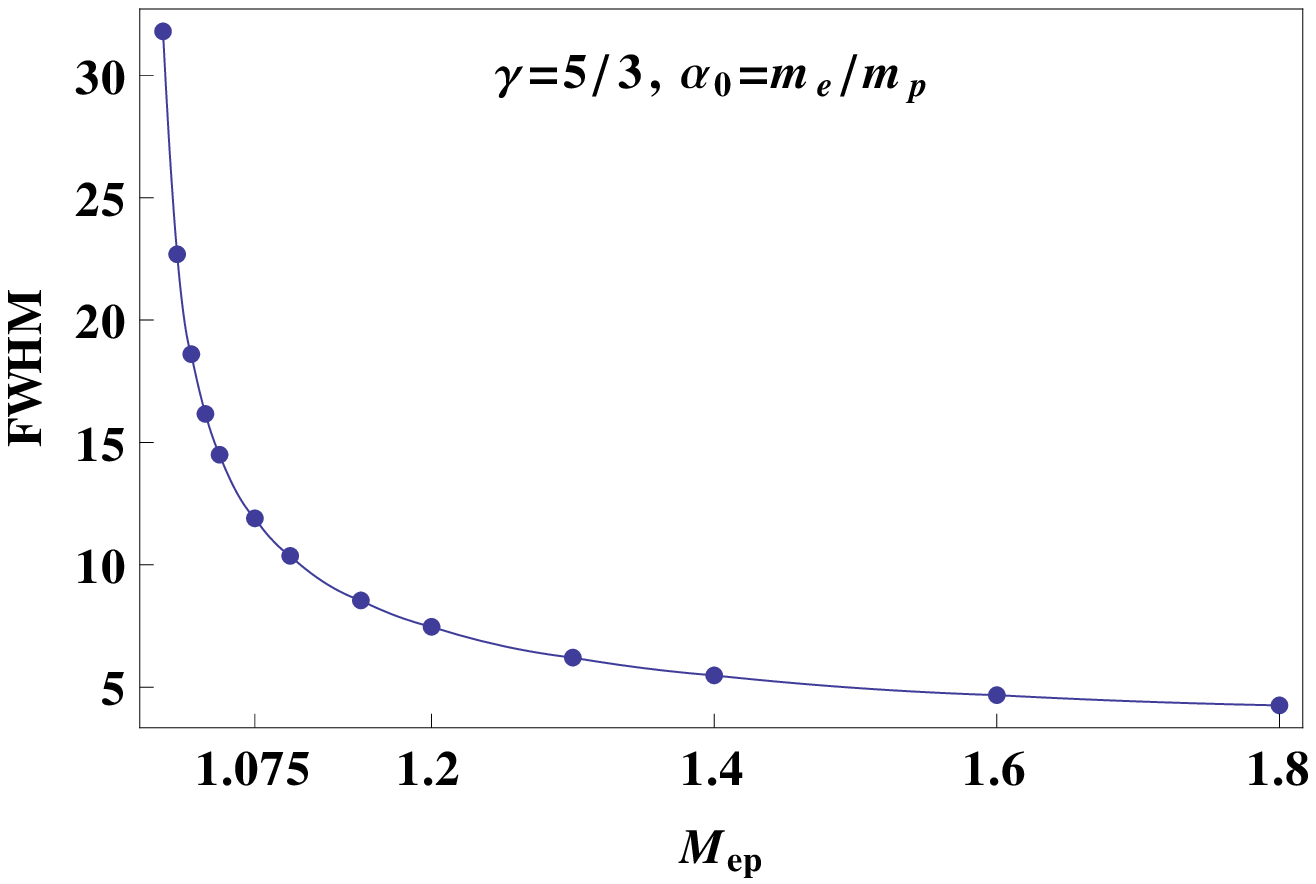}
\end{center}
\caption{Full width at half minimum (FWHM) of the electron velocity for solitary waves as characterized by Figure \ref{fig:sgntrM1p4}. 
On choosing a Mach number, $M_{ep}$,  arbitrary soliton structures can be hand sketched, by finding the corresponding wave amplitudes from Figure \ref{fig:SolEnvlp}, using Figure \ref{fig:sgntrM1p4} to note the typical wave structure, and finding the width of the wave using this figure.}\label{fig:fwhm}
\end{figure}

\section{Conclusion}
 In this paper we  have investigated the physical characteristics of ion-acoustic and Langmuir-acoustic travelling waves in a multi-fluid plasma. We first determined the dispersion equation for the system (section 3), which shows that there are two main branches, namely the ion-acoustic and the Langmuir wave 
branch. The long wavelength dispersion equation for the ion-acoustic branch 
results in the linearized KdV equation, whereas the long 
wavelength limit of the 
Langmuir wave branch results in a Klein-Gordon equation for linear waves. 
At short wavelengths, the basic wave modes are the electron acoustic 
and ion acoustic waves. These results are well known. Our main aim was to 
describe how the linear dispersion equation contains important 
information describing the travelling waves. 
In particular, far upstream the travelling wave 
is essentially a linear wave and the intersection of the dispersion equation 
$D(\omega,k)=0$ with the travelling wave condition $\omega=kV$ ($V$ 
is the travelling wave speed), yields the wave number of the wave far upstream. 

The main results of the paper are presented in Sections 4 and 5. 
In Section 4, we establish the multi-symplectic Hamiltonian structure 
for the travelling waves for an electron-ion two fluid plasma. We expect 
that the same multi-symplectic structure will also apply for multi-fluid 
plasmas that have a Hamiltonian structure governed by a canonical or 
non-canonical Poisson bracket (e.g. \cite{Spencer82}, 
\cite{SpencerKaufman82}, 
\cite{Holm83}). In the first formulation the Hamiltonian 
is identified as the total conserved longitudinal $x$-momentum of the system, 
in which the total energy flux integral acts as a constraint, 
and for which $d/d\xi$ is the Hamiltonian evolution operator ($\xi=x-Vt$ 
is the travelling wave variable and $V$ is the velocity of the wave). 
In this formulation, the canonical variables 
are $(\tilde{\varepsilon}_e,E_x)$ where 
\begin{equation}
\tilde{\varepsilon}_e=\frac{\varepsilon_0 m_e}{e}
\left(\frac{1}{2}u_e^2+\frac{c_e^2}{\gamma_e-1}\right), \label{eq:6.1}
\end{equation}
is the normalized electron energy flux and $E_x$ is the electric field 
intensity in the $x$-direction. 

In the second Hamiltonian formulation the total energy flux integral 
\begin{equation}
\varepsilon=j\left(\frac{1}{2} m_e u_e^2
+\frac{\gamma_e p_e}{(\gamma_e-1) n_e}\right)
+j\left(\frac{1}{2} m_p u_p^2
+\frac{\gamma_p p_p}{(\gamma_p-1) n_p}\right), \label{eq:6.2}
\end{equation}
is the Hamiltonian, and the total $x$-momentum integral 
\begin{equation}
P_x=j(m_e u_u+m_p u_p)+p_e+p_p-\frac{\varepsilon_0}{2} E_x^2, \label{eq:6.3}
\end{equation}
is held constant as a constraint. Here $j=n_e u_e=n_pu_p$ is the 
constant number density flux for both the protons and the electrons. In the 
latter Hamiltonian formulation, the Hamiltonian evolution operator is:
\begin{equation}
\frac{d}{d\tau}=u_p\frac{d}{d\xi}, \label{eq:6.4}
\end{equation}
where $\xi=x-Vt$ is the travelling wave variable. The canonical variables in 
this formulation are $\left(E_x,\tilde{\varepsilon}_e\right)$. 

Section 5 presented examples of solitary travelling wave solutions.
We used the \cite{McKenzie02} fluid dynamics approach to write the momentum 
equation in integral form. This form is especially convenient for calculating 
solitary wave signatures since, for a plasma system composed of any number 
and type of plasma species, it can be written as an integral over a single 
variable. By examining the special case of completely cold protons and 
hot (subsonic) electrons we have demonstrated how the fundamental integrals 
of the plasma system can be used to find the solitary wave existence conditions, the overall wave amplitude and the behavior (acceleration or deceleration) from the initial point of a given species.

The fluid dynamics expressions of conservation of energy and momentum, were used to describe ion-acoustic solitons, including 1) the range of Mach numbers over which ion-acoustic solitons can exist as a function of the electron to proton mass ratio $\alpha_0$ for several values of adiabatic gas index, showing that the range is slightly left-shifted and reduced for increasing $\alpha_0$ and the upper bound increases with increasing $\gamma$, 2) the fluid velocity behavior for each species, acceleration or deceleration from the initial point as summarized in Table \ref{tab:acdc}, and 3) the wave amplitude, Figure \ref{fig:SolEnvlp}, which indicates a very large charge imbalance at the wave center occurring for Mach numbers approaching the upper bound $M_{max}$.  The structure equation in integral form was employed to calculate solitary wave forms for a range of Mach numbers yielding the characteristic wave structure (Figure \ref{fig:sgntrM1p4}) and scale width--expressed in terms of FWHM (Figure \ref{fig:fwhm}). Our approach yields a rather complete description of ion-acoustic solitary waves for an arbitrary proton-electron mass ratio, which can be readily employed to calculate solitary wave structures to a high degree of numerical accuracy.

\begin{acknowledgments}
The work of RHB and XA was supported in part by grant  
AFOSR-FA9550-10-1-0084. GPZ was supported in part by 
NASA grants NN05GG83G and NSF grant
nos. ATM-03-17509 and ATM-04-28880. GPZ was also supported in part by 
NASA PRIME Grant NNG05EC85C with subcontract number A99132BT.

\end{acknowledgments}


\begin{thebibliography}{}

\bibitem[{\it Abramowitz and Stegun}(1965)]{Abramowitz65}
Abramowitz, M. and Stegun, I.A. 1965, Handbook of Mathematical
Functions, Dover: New York.


\bibitem[{\it Bale et al.}(1998)]{Bale98}
Bale, S. D., Kellog, P. J., Larson, D. E., Lin, R.P., Goetz, K. and 
Lepping, R.,
1998, Bipolar electrostatic structures in the shock transition region: Evidence of electron phase space holes, {\it Geophys. Res. Lett}, {\bf 25}, 2929.

\bibitem[{\it Baluku et al.}(2010)]{Baluku10}
Baluku, T. K., Hellberg, M.A., and Verheest, F., New light on ion acoustic solitary waves in a plasma with two-temperature electrons, {\it EPL}, 
 (Europhysics Letters),
 {\bf 91}, 15001, (July 2010).

\bibitem[{\em Bridges}(1992)]{Bridges92}
Bridges, T. J. 1992, Spatial hamiltonian structure, energy flux and the
water wave problem, {\it Proc. Roy. Soc. Lond. A}, {\bf 439}, 297-315.
 
\bibitem[{\it Bridges}(1997a)]{Bridges97a}
 Bridges, T.J.,  1997a, Multi-symplectic structures and wave propagation, 
{\it Math. Proc. Camb. Philos. Soc.}, {\bf 121}, 147-190.

\bibitem[{\it Bridges}(1997b)]{Bridges97b} 
Bridges, T. J., 1997b,
A geometric formulation of the conservation of wave action and its
implications for signature and classification of instabilities, 
{\it Proc. Roy. Soc.}, {\bf  A}, {\bf 453}, 1365-1395 (1997b).

\bibitem[{\it Bridges and Reich}(2001)]{Bridges01}
Bridges, T. J. and Reich, S. 2001, Multi-symplectic integrators: numerical schemes for Hamiltonian pdes that conserve symplecticity, {\it Phys. Lett. A}, 
{\bf 284}, 184-193.

\bibitem[{\it Bridges et al.}(2005)]{Bridges05} 
Bridges, T. J., 
 Hydon P. E. and
S. Reich, 2005, Vorticity and
symplecticity in Lagrangian fluid dynamics, \emph{J. Phys. A: Math. Gen.} 
\textbf{38}, 1403-1418. 

\bibitem[{\it Bridges and Reich}(2006)]{Bridges06} 
Bridges, T. J. and Reich, S. 2006, Numerical methods for 
Hamiltonian pdes,{\it J. Phys. A., Math. and Gen.}, {\bf 39}, 5287-5320. 

\bibitem[{\it Brio et al.}(2010)]{Brio10}
Brio, M., Zakharian, A.R. and Webb, G.M. 2010, Numerical time Dependent 
Partial differential Equations for Scientists and Engineers, 
{\em Mathematics in Science and Engineering}, Ed. C. K. Chui, Elsevier Press, first Edition, 
Vol. 213.

\bibitem[{\it Burrows et al.}(2010)]{Burrows10}
Burrows, R. H., Zank, G.P., Webb, G.M., Burlaga, L.F. and Ness, N.F. 2010,
Pickup Ion Dynamics at the Heliospheric Termination Shock Observed by 
Voyager 2, {\it Ap. J.}, {\bf 715}, 1109-1116, doi:10.1088/0004-637X/715/2/1109.
9 (June 2010).
 

\bibitem[{\it Cotter et al. }(2007)]{Cotter07}
Cotter, C. J., Holm, D. D., and Hydon, P. E., 2007, Multi-symplectic
formulation of fluid dynamics using the inverse map, 
{\it Proc. Roy. Soc. Lond.} A, {\bf 463}, 2617-2687 (2007).



\bibitem[{\it Dubinin et al.}(2003)]{Dubinin03}
 Dubinin, E., Sauer, K. and McKenzie, J.F.\ 2003, Nonlinear stationary
whistler waves and whistler solitons (oscillitons). Exact solutions,
{\it J. Plasma Phys.}, {\bf 69}, 305-330.

\bibitem[{\it Dubinin et al.} (2007)]{Dubinin07}
Dubinin, E. M., Maksimovic, M., Cornilleau-Wherlin, N., Fontaine, D., 
Travnicek, P., Mangeney, A., Alexandrova, O., Sauer, K., Fraenz, M., 
Dandoras, I., Lucek, E., Fazakerly, A., Balogh, A., and Andre, M. 
2007, Coherent whistler emissions in the magnetosphere – Cluster
observations, {\it Ann. Geophys.}, {\bf 25}, 303–315
www.ann-geophys.net/25/303/2007/

\bibitem[{\it Dubinov}(2007a)]{Dubinov07a}
Dubinov, A. E., 2007a, Theory of nonlinear space charge waves in 
neutralized electron flows: gas dynamic approach, {\it Plasma Physics Reports}, {\bf 33}, No. 3, 210-217.

\bibitem[{\it Dubinov}(2007b)]{Dubinov07b}
Dubinov, A. E. 2007b, Gas dynamic approach in the nonlinear theory of 
ion acoustic waves in a plasma: an exact solution, {\it J. of Appl. Mech. 
and Tech. Phys.}, {\bf 48}, No. 5, pp. 621-628.

\bibitem[{\it Holm and Kupershmidt}(1983)]{Holm83}
Holm, D.D. and Kupershmidt, B. A. 1983, Poisson brackets and Clebsch 
representations for magnetohydrodynamics, multi-fluid plasmas and elasticity,
{\it Physica D}, {\bf 6}, 347-363.

\bibitem[{\it Hydon}(2005)]{Hydon05}
Hydon, P. E. 2005, Multi-symplectic conservation laws for differential and 
differential-difference equations, {\it Proc. Roy. Soc. A}, {\bf 461}, 
pp. 1627-1637, doi:10.1098/rspa.2004.1444

\bibitem[{Levermore}(1988)]{Levermore88}
Levermore, C. D. 1988, The hyperbolic nature of the zero dispersion KdV limit, 
{\it Comm. Partial differential Equations}, {\bf 13}, 495.

\bibitem[{\it Lipatov and Zank}(1999)]{Lipatov99}
Lipatov, A. S. and Zank, G.P. 1999, Pickup Ion Acceleration at Low- beta p 
Perpendicular Shocks, {\it Phys. Rev. Lett.}, {\bf 82}, 
3609 (May 1999).

\bibitem[{\it Mace et al.}(2007)]{Mace07}
Mace, R.L., McKenzie, J.F. and Webb, G.M. 2007, Conservation laws for
steady flow and solitons in a multi-fluid plasma re-visited, {\it Physics
of Plasmas}, {\bf 14}, Issue 1, pp. 012310-012310-9,
doi:10.1063/1.2423250.


\bibitem[{\it Marsden et al.}(1998)]{Marsden98}
Marsden, J. E., Patrick, G.W. and Shkoller, S. 1998, Multisymplectic geometry,
variational integrators and nonlinear pdes, Commun. Math. Phys. {\bf 199}, 
351–395 (1998)

\bibitem[{\it Marsden and Shkoller} (1999)]{Marsden99} 
Marsden, J.E. and Shkoller, S. 1999, Multi-symplectic geometry, 
covariant Hamiltonians and water waves, {\it Math. Proc. Camb. Phil. Soc.}, 
{\bf 125}, 553-575.


\bibitem[{\it McKenzie}(2002)]{McKenzie02}
McKenzie, J. F. 2002, The ion-acoustic soliton: A gas-dynamic viewpoint,
 {\it Physics of Plasmas}, {\bf 9}, issue 3, 800. 

\bibitem[{\it McKenzie}(2003)]{McKenzie03a}
McKenzie, J. F. 2003, Electron acoustic-Langmuir solitons in a 
two-component electron plasma, {\it J. Plasma Phys.}, {\bf 69}, 199. 


\bibitem[{\it McKenzie and Doyle}(2003)]{McKenzie-Doyle03}
McKenzie, J.F. and Doyle, T.B.\ 2003, A unified view of acoustic-electrostatic
solitons in  complex plasmas, {\it New J. Phys.}, {\bf 5}, 26.1-26.10.

\bibitem[{\it McKenzie et al.}(2004)]{McKenzie-etal04}
McKenzie, J.F., Dubinin, E., Sauer, K. and Doyle, T.B.\ 2004, The application
of the constants of the motion to nonlinear stationary waves in complex
plasmas: a unified fluid dynamic viewpoint, {\it J. Plasma Phys.},
{\bf 70}, 431-462.

\bibitem[{\it McKenzie et al.}(2006)]{McKenzie-etal06}
McKenzie, J.F., R.L. Mace, and Doyle, T.B. 2006, Nonlinear Hall MHD and
electrostatic ion-cyclotron waves: a Hamiltonian-geometric viewpoint,
{\it J. Plasma Phys.}, {\bf 73}, No. 5, pp. 687-700.

\bibitem[{\it Moslem}(1999)]{Moslem99}
Moslem, W. M., 2000, Propagation of ion acoustic waves in a warm multicomponent plasma with an electron beam, {\it J. Plasma Phys.}, {\bf 61}, part 2, 
pp. 177-189.

\bibitem[{\it Moslem}(2000)]{Moslem00}
Moslem, W.M. 2000, Higher order contributions to ion-acoustic solitary waves 
in a warm multicomponent plasma with an electron beam,
 {\it J. Plasma Phys.}, {\bf 63}, part 2, pp. 139-155.

\bibitem[{\it Oka et al.} (2011)]{Oka11}
Oka, M., Zank, G.P., Burrows, R.H., and Shinohora, I. 2011, 
Energy Dissipation at the Termination Shock: 1D PIC Simulation, 
{\it AIP Proc. Conf.}, {\bf 1366}, Eds.  V. Florinski, J. Heerikhuisen, 
G.P. Zank, D.L. Gallagher, pp53-59.

\bibitem[{\it Oka et al.}(2010)]{Oka10}
Oka, M., Fujimoto, M., Shinohara, I., Phan, T.D., 2010, “Island surfing” mechanism of electron acceleration during magnetic reconnection, 
{\it J. Geophys. Res.,} (Space Physics), {\bf 115}, A08223 (August 2010), 
ArXiv:1004.1150-[astro-ph.SR].

\bibitem[{\it Ott and Sudan}(1969)]{Ott69}
Ott, E. and Sudan, R.N. 1969, Nonlinear theory of ion acoustic waves with 
Landau damping, {\it Phys. Fluids}, {\bf 12}, No. 11, 2388-2394.

\bibitem[{\it Richardson et al.}(2008)]{Richardson08}
Richardson, J. D., Kasper, J.C., Wang, C., Belcher, J.W. and Lazarus, A.J. 
2008, {\it Nature}, {\bf 454}, 63. 

\bibitem[{\it Sauer et al.}(2001)]{Sauer01}
Sauer, K., Dubinin, E. and McKenzie, J.F.\ 2001, New type of soliton in bi-ion
plasmas and possible implications, {\it Geophys. Res. Lett.}, {\bf 28}, 3589.

\bibitem[{\it Sauer et al.}(2002)]{Sauer02}
Sauer, K., Dubinin, E. and McKenzie, J.F.\ 2002, Coherent wave emission by
whistler oscillitons: application to lion roars,
{\it Geophys. Res. Lett.}, {\bf 29}, 2226.

\bibitem[{\it Sauer et al.} (2003)]{Sauer03} 
Sauer, K. Dubinin, E. and McKenzie, J .F.\ 2003,  Solitons and oscillitons in
multi-ion plasmas, {\it Nonlin. Proc. Geophys.}, {\bf 10}, 121.

\bibitem[{\it Shin et al.}(2008)]{Shin08}
Shin, K., Kojima, H., Masumoto, H., and Mukai, T. 2008, Characteristics of 
electrostatic solitary waves in the Earth's foreshock region: 
Geotail observations, {\it J. Geophys. Res.},
(space physics), {\bf 113}, A03101. 

\bibitem[{\it Spencer}(1982)]{Spencer82}
 Spencer, R.G.\ 1982,  The Hamiltonian structure of multi-species fluid
electrodynamics, in "Mathematical methods in hydrodynamics and integrability
in dynamical systems", Eds. M. Tabor and Y.M. Treve,
{\it AIP Proc. Conf.}, {\bf 88}, 121-126.

\bibitem[{\it Spencer and Kaufman}(1982)]{SpencerKaufman82}
Spencer, R.G. and Kaufman, A.N.\ 1982, Hamiltonian structure of two-fluid 
plasma dynamics, {\it Phys. Rev. A}, {\bf 25} (4), 2437-2439.

\bibitem[{\it Tidman and Krall}(1971)]{Tidman71} 
Tidman, D.A. and Krall, N.A. 1971, {\it Shock Waves in collisionless plasmas},
(Wiley Series in Plasma Physics, New York: Interscience, 1971).

\bibitem[{\it Verheest et al.} (2004a)]{Verheest04a}
 Verheest, F., Cattaert, T., Dubinin, E., Sauer, K. and McKenzie, J.F.
2004,
Whistler oscillitons revisited: the role of charge neutrality?,
{\it Nonlin. Proc. Geophys.}, {\bf 11}, 447-452.

\bibitem[{\it Verheest et al.}(2004b)]{Verheest04b}
Verheest, F., Cattaert, T., Lakhina, G.S., and Singh, S.V. 2004, 
Gas-dynamic description of electrostatic solitons, 
{\it J. Plasma Phys.}, {\bf 70}, 237.

\bibitem[{\it Washimi and Taniuti}(1966)]{Washimi66}
Washimi, H. and Taniuti, T. 1966, Propagation of ion-acoustic solitary waves of small amplitude, {\it Phys. Rev. Lett.}, {\bf 17}, No. 19, 996-998.

\bibitem[{\it Webb et al.}(2005)]{Webb05}
Webb, G.M., McKenzie, J.F., Dubinin, E.M. and Sauer, K. 2005,
Hamiltonian formulation of nonlinear travelling whistler waves,
{\it Nonlinear Proc. Geophys.}, {\bf 12}, 643-660.

\bibitem[{\it Webb et al.}(2007)]{Webb07}
Webb, G. M.; McKenzie, J. F.; Mace, R. L.; Ko, C. M.; Zank, G. P. 2007,
Dual variational principles for nonlinear traveling waves in multifluid 
plasmas,
{\it Phys. of Plasmas}, {\bf 4}, Issue 8, 
pp. 082318-082318-17, doi:10.1063/1.2757154.

\bibitem[{\it Webb et al.}(2008)]{Webb08}
Webb, G. M., Ko, C. M., Mace, R.L., McKenzie, J.F. and Zank, G.P. 2008,
Integrable, oblique travelling waves in charge neutral, two-fluid plasmas,
{\it Nonl. Proc. Geophys.}, {\bf 15}, 179-208.

\bibitem[{\it Wilson et al.}(2007)]{Wilson07}
Wilson, L.B. III, Cattel, C., Kellog, P.J., Goetz, K., Kersten, K., Hanson, L., MacGregor, R. and Kasper, J.C., 2007,Waves in Interplanetary Shocks: A Wind/WAVES Study, {\it PRL}, {\bf 99}, 041101, 
July 2007.

\bibitem[{\it Zank et al.}(1996)]{Zank96}
Zank, G.P., Pauls, H.L., Cairns, I.H., and Webb, G.M. 1996,
Interstellar pickup ions and quasi-perpendicular shocks: Implications for the 
termination shock and interplanetary shocks, 
{\it J. Geophys. Res.}, {\bf 101}, 457.

\bibitem[{\it Zank et al.}(2010)]{Zank10}
Zank, G.P., Heerikhuisen, J., Pogorelov, N.V., Burrows, R., and McComas, D. 
2010, Microstructure of the Heliospheric Termination Shock: Implications for 
Energetic Neutral Atom Observations, {\it Ap. J.}, {\bf 708}, 1092-1106,	
10.1088/0004-637X/708/2/1092. 
\end{thebibliography}
\end{document}